\documentclass[letterpaper, 10 pt, conference]{ieeeconf}
\IEEEoverridecommandlockouts 
\overrideIEEEmargins

\usepackage{amsmath, amsthm, amssymb}
\usepackage{color}
\usepackage{enumitem}
\usepackage{graphicx}
\usepackage[english]{babel}
\usepackage[autostyle]{csquotes}
\usepackage{setspace}
\usepackage{titlesec}

\theoremstyle{definition}
\newtheorem{theorem}{Theorem}
\newtheorem{proposition}{Proposition}

\newtheorem{corollary}{Corollary}
\newtheorem{definition}{Definition}
\theoremstyle{remark}
\newtheorem*{remark}{Remark}

\title{\LARGE \bf
Stackelberg Routing of Autonomous Cars in Mixed-Autonomy Traffic Networks
}

\author{Maxwell Kolarich$^{1}$ and Negar Mehr$^{2}$
\thanks{$^{1}${Maxwell Kolarich is with Aerospace Engineering Department of the University of Illinois, Urbana Champaign, 139 Coordinated Science Laboratory, IL, USA
	{\tt\small mak13@illinois.edu}}
}
\thanks{$^{2}$Negar Mehr is with the Aerospace Engineering Department of the University of Illinois, Urbana Champaign, 161 Coordinated Science Laboratory, IL, USA
        {\tt\small negar@illinois.edu}}%
}


\newcommand{\network}{\mathcal{G}}

\newcommand{\nodeset}{\mathcal{N}}
\newcommand{\link}{l}
\newcommand{\linkset}{\mathcal{L}}
\newcommand{\OD}{w}
\newcommand{\ODset}{\mathcal{W}}
\newcommand{\path}{p}
\newcommand{\pathset}{\mathcal{P}}
\newcommand{\demand}{r}
\newcommand{\latency}{e}
\newcommand{\cost}{C}

\newcommand{\auto}{f^a}
\newcommand{\human}{f^h}
\newcommand{\stack}{s}
\newcommand{\induced}{t}
\newcommand{\autoopt}{f^{a^*}}
\newcommand{\humanopt}{f^{h^*}}
\newcommand{\degauto}{\alpha}
\newcommand{\degasym}{\mu}

\usepackage[colorlinks]{hyperref}

\pdfminorversion=4

\begin{document}

\abovedisplayskip 1ex plus3pt minus1pt%
\belowdisplayskip \abovedisplayskip%
\abovedisplayshortskip 0pt plus3pt%
\belowdisplayshortskip 1ex plus3pt minus1pt
\abovecaptionskip=0pt
\belowcaptionskip=0pt

\maketitle
\thispagestyle{plain}
\pagestyle{plain}

\begin{abstract}
As autonomous cars are becoming tangible technologies, road networks will soon be shared by human-driven and autonomous cars. However, humans normally act selfishly which may result in network inefficiencies. In this work, we study increasing the efficiency of mixed-autonomy traffic networks by routing autonomous cars altruistically. We consider a Stackelberg routing setting where a central planner can route autonomous cars in the favor of society such that when human-driven cars react and select their routes selfishly, the overall system efficiency is increased. We develop a Stackelberg routing strategy for autonomous cars in a mixed-autonomy traffic network with arbitrary geometry. We bound the price of anarchy that our Stackelberg strategy induces and prove that our proposed Stackelberg routing will reduce the price of anarchy, i.e. it increases the network efficiency. Specifically, we consider a non-atomic routing game in a mixed-autonomy setting with affine latency functions and develop an extension of the SCALE Stackelberg strategy for mixed-autonomy networks. We derive an upper bound on the price of anarchy that this Stackelberg routing induces and demonstrate that in the limit, our bound recovers the price of anarchy bounds for networks of only human-driven cars. 
\end{abstract}

\section{Introduction}
As autonomous cars are becoming tangible technologies, traffic networks will soon experience a transient era, when both human-driven and autonomous cars shared roads, i.e. traffic networks will have mixed vehicle autonomy. Consequently, it is crucial to study the mobility impacts of autonomous cars under mixed vehicle autonomy. In~\cite{wu2017emergent}, deep reinforcement learning was utilized to predict the emergent behavior of mixed-autonomy networks. It was shown in~\cite{li2019extended,mehr2021game} that autonomous cars can select their lanes such that the overall mobility is improved. Additionally, it has been demonstrated that autonomous cars can facilitate vehicle {platooning} using their connectivity, that is, autonomous cars can form groups of vehicles that maintain a headway that is shorter than the average headway of human-driven cars. As a result, it is expected that autonomous cars can increase road capacities in mixed-autonomy networks. In~\cite{lioris2017platoons}, it was shown that the capacity increases of vehicle platooning can be up to three-fold. Such capacity increases were modeled for mixed-autonomy networks in~\cite{lazar2017capacity}.

A key challenge in predicting the mobility benefits of autonomous cars in mixed-autonomy networks is to account for how human-driven cars will respond and adapt to the presence of autonomy. Humans normally act selfishly and decide in their own best interest. This results in a complex interaction between humans and autonomous cars which may even have undesirable impacts on society. For instance, in our previous work~\cite{mehr2018can,mehr2019will}, we showed that the capacity increases of autonomous cars may even worsen traffic congestion when both human-driven and autonomous cars select their routes selfishly. The inefficiency that result from selfish route choices of travellers is commonly measured via the notion of price of anarchy~\cite{roughgarden2002selfish} which measures how far from the optimal the performance of a network is due to selfish routing. In~\cite{lazar2018price} it was shown that the price of anarchy of networks with mixed autonomy is indeed larger than that of networks with only human-driven cars. This indicates that the inefficiencies that result from selfish route choices can be higher for mixed-autonomy networks. Recent works have considered decreasing the inefficiency of mixed-autonomy networks through pricing and tolling~\cite{mehr2019pricing,lazar2019optimal}.

We consider increasing the efficiency of mixed-autonomy traffic networks by routing autonomous cars \textit{altruistically}. We consider the setting where a central planner can route autonomous cars in the favor of society such that when human-driven cars react and select their routes selfishly, the overall system efficiency is increased. Such routing strategies have been extensively studied in the routing games literature as Stackelberg strategies~\cite{roughgarden2004stackelberg,krichene2014stackelberg,BONIFACI2010,harks2010stackelberg,karakostas2009stackelberg,korilis1997achieving}, where several Stackelberg strategies have been proposed for various network geometries and latency functions. It has been proved that the price of anarchy of the system decreases as a result of Stackelberg routing of a fraction of the flow. However, all these works considered a routing game with one single class of vehicles in the network. These results cannot be directly applied to mixed-autonomy networks where there exist two classes of vehicles (human-driven and autonomous). Moreover,  in a mixed-autonomy setting, the two classes of vehicle contribute asymmetrically to the latency along link; the autonomous cars create less externalities due to their platooning capabilities. Only recently, in~\cite{biyik2018altruistic}, Stackelberg routing of autonomous cars in mixed-autonomy networks was studied for networks with parallel links. 

In this paper, we develop Stackelberg routing strategies for autonomous cars in a mixed-autonomy traffic network with arbitrary geometry. We bound the price of anarchy that our Stackelberg strategy induces and prove that our proposed Stackelberg routing will reduce the price of anarchy, i.e. it increases the network efficiency. Specifically, we consider a non-atomic routing game in a mixed-autonomy setting with affine latency functions. We consider a setting where an $\alpha$ fraction of demand along each origin/destination (O/D) pair is autonomous and can be routed by a central planner altruistically. We develop an extension of the SCALE Stackelberg strategy~\cite{BONIFACI2010} for mixed-autonomy networks and derive an upper bound on the price of anarchy that this Stackelberg routing induces. We demonstrate that in the limit, when the fraction of autonomous cars is negligible, our bound recovers the price of anarchy bounds for networks of only human-driven cars. 
 
The organization of this paper is as follows. In Section~\ref{sec:mixed-autonomy-routing-game}, we describe our mixed-autonomy routing game. In Section~\ref{sec:prior-work}, we review some known results from prior work that we utilize in this paper. We extensively discuss and prove the effectiveness of our proposed Stackelberg routing strategy in Section~\ref{sec:stackelberg}. Then, we conclude the paper and provide future directions in Section~\ref{sec:conclusion}.

\section{Mixed-Autonomy Routing Games}\label{sec:mixed-autonomy-routing-game}
We consider a traffic network as a directed graph $\network = (\nodeset, \linkset)$ with a set nodes $\nodeset$ and links $\linkset$. Each link $\link \in \linkset$ is a pair of unique elements $u,v \in \nodeset$ representing a directed edge from node $u$ to node $v$. We consider a network with \textit{arbitrary} geometry. Additionally, we assume we have a set $\ODset$ of $k$ origin/destination (O/D) pairs where each O/D pair is denoted by $(o_\OD, d_\OD)$, with $o_\OD, d_\OD \in \nodeset$ and $o_\OD \neq d_\OD$ representing the origin and destination nodes respectively. For every $\OD \in \ODset$, we assume that a fixed demand $\demand_\OD > 0$ is given which specifies the total amount of required flow (including both human-driven and autonomous cars) that needs to be routed along O/D pair $w$. 
We let $\demand = (\demand_\OD)_{\OD \in \ODset}$ be the vector of network demands. Let $\pathset_\OD$ be the set of all paths that connect O/D pair $\OD$. For each O/D pair $\OD \in \ODset $, we assume there exists at least one path from its origin to its destination i.e., $\pathset_\OD \neq \emptyset$. Additionally, let $\pathset = \cup_{\OD \in \ODset} \pathset_\OD$ denote the set of all network paths.

For each route $p \in \pathset$, we denote the flow of autonomous and human-driven cars along $p$ by $\auto_\path$ and $\human_\path$, respectively. Moreover, we define the network flow vectors $\auto = (\auto_\path)_{\path \in \pathset}$ and $\human = (\human_\path)_{\path \in \pathset}$.
For each O/D pair $\OD \in \ODset$, we let $\alpha_\OD$ be the given autonomy fraction along O/D pair $\OD$ which is the fraction of the demand along $\OD$ that is autonomous. 
For a given network, the flows $\auto$ and $\human$ are feasible if they satisfy flow conservation i.e, for all $\OD \in \ODset$, we have
\begin{equation} \label{eq:feasibility}
    \sum_{\path \in \pathset_\OD} \auto_\path = \degauto_\OD \demand_\OD, \quad \quad \sum_{\path \in \pathset_\OD} \human_\path = (1-\degauto_\OD) \demand_\OD.
\end{equation}

For each link $\link \in \linkset$, the flows of autonomous and human-driven cars along the link are defined as:
\begin{equation}
    \auto_\link := \sum_{\path \in \pathset : \link \in \path} \auto_\path \quad,  \quad \human_\link := \sum_{\path \in \pathset : \link \in \path} \human_\path.
\end{equation}
Every link $\link \in \linkset$ has an associated \textit{latency function} $\latency_\link: \mathbb{R}^2_+ \to \mathbb{R}_+$ which is a nondecreasing and differentiable function of flow along the link. We define $\latency = (\latency_\link)_{\link \in \linkset}$ as the vector of latency functions. 
Additionally, for each path $\path \in \pathset$, we define the latency along $p$ as 
\begin{equation}
    \latency_\path(\auto, \human) := \sum_{\link \in \linkset : \link \in \path} \latency_\link(\auto_\link, \human_\link).
\end{equation}

We assume that autonomous vehicles are capable of \textit{platooning}, in which groups of vehicles may travel together in order to maintain shorter headways. Platooning may be used to directly increase the capacity of autonomous vehicles on a given link as shown in~\cite{lazar2017capacity} and~\cite{SALA2020163}. Under the platooning assumption, we take our latency functions to be affine functions of the form presented in~\cite{mehr2019will}:
\begin{equation} \label{eq:latency_functions}
    \latency_\link(\auto_\link, \human_\link) = a_\link \auto_\link + h_\link \human_\link + b_\link
\end{equation}
where $a_\link,h_\link > 0$, and $b_\link \geq 0$. Thus, the relative impact of autonomous and human-driven vehicles is determined by the coefficients $a_\link$ and $h_\link$. This motivates the definition of the \textit{degree of asymmetry}:
\begin{equation} \label{eq:degree_asymmetry}
    \degasym_\link := \frac{a_\link}{h_\link}
\end{equation}
Since autonomous vehicles are capable of maintaining shorter headways, we assume $\degasym_\link \in (0,1]$.

The tuple $(\network, \demand, \latency)$ is called an \textit{instance} of a mixed autonomy traffic routing game. The \textit{social cost} of flows $\auto,\human$ is given by
\begin{equation}
    \cost(\auto, \human) = \sum_{\link \in \linkset} (\auto_\link + \human_\link) \latency_\link(\auto_\link, \human_\link).
\end{equation}
For a mixed-autonomy routing game $(\network, \demand, \latency)$, the optimal flow vectors ${\auto}^*, {\human}^*$ are the feasible flow vectors which attain the minimum social cost.

In a Stackelberg routing game, we assume: 1) autonomous cars are controlled by a \textit{central planner} attempting to route their portion of flow to minimize the overall social cost and, 2) human-driven cars are represented by \textit{selfish nonatomic agents} that seek to control their infinitesimal portion of the flow demand to minimize their own latency by choosing paths of minimum latency. For this game, in addition to the instance $(\network, \demand, \latency)$, we are given a \textit{network autonomy fraction} $\degauto \in (0,1)$. This parameter represents the fraction of total network demand satisfied by autonomous vehicles and is defined as
\begin{equation}
    \degauto := \frac{\sum_{\OD \in \ODset} \degauto_\OD \demand_\OD}{\sum_{\OD \in \ODset} \demand_\OD}.
\end{equation}
A \textit{Stackelberg strategy} is an autonomous flow vector $\stack$ feasible with respect to the network autonomy fraction:  
\begin{equation}
    \sum_{\OD \in \ODset} \sum_{\path \in \pathset_\OD} \stack_\path = \degauto \sum_{\OD \in \ODset} \demand_\OD.
\end{equation}
If $\degauto_\OD = \degauto$ for all $\OD \in \ODset$, $\stack$ is called a \textit{weak Stackelberg strategy}~\cite{roughgarden2005selfish}. Additionally, a Stackelberg strategy $\stack$ is called \textit{opt-restricted} if for each link $\link \in \linkset$,  $\stack_\link \leq \autoopt_\link + \humanopt_\link$~\cite{correa2007stackelberg}.



Since we assumed that the human-driven cars act selfishly, we assume that human-driven cars reach \textit{Nash Equilibrium}, a distribution of traffic in which no nonatomic player has any incentive to unilaterally change their path. 
\begin{definition} \label{def:nash_induced}
    Given a mixed-autonomy routing game $(\network, \demand, \latency)$, network autonomy fraction $\degauto$, and Stackelberg strategy $\stack$, a feasible flow of human-driven cars $\induced$ is an induced Nash flow if and only if for every O/D pair $\OD \in \ODset$ and every pair of paths $\path, \path' \in \pathset_\OD$
    \begin{equation} 
        \induced_\path \latency_\path(\stack_\path, \induced_\path) \leq \induced_\path \latency_{\path'}(\stack_\path, \induced_\path).
    \end{equation}
\end{definition}

To measure the efficiency of a given Stackelberg routing, we will consider the ratio of the social cost induced by the Stackelberg strategy to the optimal cost which is called the \textit{price of anarchy}, which we denote as
\begin{equation}
    \mathrm{PoA} = \frac{\cost(\stack, \induced)}{\cost(\autoopt, \humanopt)}.
\end{equation}
The price of anarchy captures the inefficiencies that are caused by the selfish routing decisions of human-driven cars. If all cars are routed optimally, then $\mathrm{PoA}=1$. For a non-optimal routing, $\mathrm{PoA}$ measures how far from optimum the network routing is.

\section{Prior Work}\label{sec:prior-work}
We state a number of known results and methods from previous literature that will be utilized to develop and compare with our price of anarchy upper bound. 

For a mixed-autonomy routing game where both autonomous and human-driven vehicles are selfishly routed, it was shown in \cite{lazar2018price} that the price of anarchy can be bound independently of $\degauto$ and in terms of the \textit{minimum degree of asymmetry}:
\begin{equation}
    \degasym := \min_{\link \in \linkset} \degasym_\link.
\end{equation}
The price of anarchy bound was given for $\degasym > 1/4$ as
\begin{equation}
    \mathrm{PoA} \leq \frac{4\degasym}{4\degasym-1}.
\end{equation}

A number of Stackelberg strategies have been previously studied on a wide range of network geometries and latency functions. A popular and intuitive strategy is to route an $\degauto$ fraction of the optimal flow along each link. This is known as the SCALE strategy. It was shown in \cite{BONIFACI2010} that for arbitrary networks with a single class of vehicles in the network $(\degasym = 1)$ that the price of anarchy for the SCALE strategy is bounded by
\begin{equation}
    \mathrm{PoA} \leq \frac{(1+\sqrt{1-\degauto})^2}{2(1+\sqrt{1-\degauto})-1}
\end{equation}

In order to develop an efficient routing strategy for the mixed-autonomy case, we define a mixed-autonomy equivalent to the SCALE strategy as follows.
\begin{definition} \label{def:scale_strategy}
    Given a mixed-autonomy routing game $(\network, \demand, \latency)$ and network autonomy fraction $\degauto$, the mixed-autonomy SCALE strategy is given by
    \begin{equation} \label{eq:scale_strategy}
        \stack := \degauto(\autoopt + \humanopt)
    \end{equation}
    where $\autoopt$ and $\humanopt$ are the optimal flow vectors for $(\network, \demand, \latency)$.
\end{definition}
The mixed-autonomy SCALE strategy is a weak Stackelberg strategy since it assigns autonomous flow to satisfy an $\degauto$ fraction of the demand on each O/D pair. Additionally, since $\degauto \in (0,1)$, the SCALE strategy is clearly an opt-restricted Stackelberg strategy.


A particularly useful method for deriving an upper bound on the price of anarchy for an opt-restricted Stackelberg strategy is the \enquote{$\lambda$-approach} given in~\cite{BONIFACI2010} and~\cite{harks2010stackelberg}. We state this result here, extended to mixed-autonomy routing games, in the following definitions and proposition.

First, for the nonnegative number $\lambda$, define the link based parameter
\begin{equation}
    \omega_\link(\lambda) := \sup_{\autoopt_\link, \humanopt_\link,  \, \induced_\link \geq 0} \frac{\autoopt_\link + \humanopt_\link}{\stack_\link + \induced_\link}  \frac{\latency_\link(\stack_\link, \induced_\link) - \lambda \latency_\link(\autoopt_\link, \humanopt_\link)}{\latency_\link(\stack_\link, \induced_\link)},
\end{equation}
and the maximum value of this parameter over all links in the network as
\begin{equation} \label{eq:omega_full}
    \omega(\lambda) := \max_{\link \in \linkset} \omega_\link(\lambda).
\end{equation}
Additionally, the feasible set of $\lambda$ is defined to be
\begin{equation} \label{eq:Lambda_set}
    \Lambda := \{\lambda \in [0,1] \: | \: \omega(\lambda) < 1 \}.
\end{equation}

\noindent Then, the $\lambda$-approach is stated as follows.
\begin{proposition} [{{\cite{BONIFACI2010,harks2010stackelberg}}}] \label{prop:lambda_method}
    Let $\lambda \in \Lambda$. Then
    \begin{equation}
        \mathrm{PoA} \leq \frac{\lambda}{1-\omega(\lambda)}.
    \end{equation}
\end{proposition}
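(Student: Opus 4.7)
The plan is to follow the smoothness-style argument underlying the $\lambda$-approach. I would bound $\cost(\stack,\induced)$ above by $\omega(\lambda)\cost(\stack,\induced)+\lambda\cost(\autoopt,\humanopt)$ and then rearrange, using $\omega(\lambda)<1$. Two ingredients are needed: a variational inequality from the human Nash condition, and the link-wise inequality built into the definition of $\omega_\link(\lambda)$.

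First I would exploit opt-restrictedness to construct a feasible comparison human flow. Setting $\induced'_\link := \autoopt_\link+\humanopt_\link-\stack_\link$, opt-restrictedness ensures $\induced'_\link\geq 0$, while flow conservation of $(\autoopt,\humanopt)$ against full demand together with feasibility of $\stack$ against the autonomous demand leaves exactly the human demand $(1-\degauto_\OD)\demand_\OD$ on each O/D pair. Hence $\induced'$ is a feasible human flow. Multiplying the Wardrop condition of Definition~\ref{def:nash_induced} through, summing over paths, and converting path sums to link sums yields
\begin{equation*}
\sum_{\link\in\linkset}\induced_\link\,\latency_\link(\stack_\link,\induced_\link)\;\leq\;\sum_{\link\in\linkset}(\autoopt_\link+\humanopt_\link-\stack_\link)\,\latency_\link(\stack_\link,\induced_\link).
\end{equation*}
Adding $\sum_\link \stack_\link\,\latency_\link(\stack_\link,\induced_\link)$ to both sides gives
\begin{equation*}
\cost(\stack,\induced)\;\leq\;\sum_{\link\in\linkset}(\autoopt_\link+\humanopt_\link)\,\latency_\link(\stack_\link,\induced_\link).
\end{equation*}

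Next, the supremum in the definition of $\omega_\link(\lambda)$, evaluated at the actual values $(\autoopt_\link,\humanopt_\link,\induced_\link)$, gives link by link
\begin{equation*}
(\autoopt_\link+\humanopt_\link)\,\latency_\link(\stack_\link,\induced_\link)\;\leq\;\omega_\link(\lambda)(\stack_\link+\induced_\link)\,\latency_\link(\stack_\link,\induced_\link)\;+\;\lambda(\autoopt_\link+\humanopt_\link)\,\latency_\link(\autoopt_\link,\humanopt_\link).
\end{equation*}
Summing over $\link$, bounding each $\omega_\link(\lambda)$ by $\omega(\lambda)$, and substituting into the previous inequality delivers
\begin{equation*}
\cost(\stack,\induced)\;\leq\;\omega(\lambda)\,\cost(\stack,\induced)\;+\;\lambda\,\cost(\autoopt,\humanopt).
\end{equation*}
Since $\lambda\in\Lambda$ enforces $\omega(\lambda)<1$, rearranging gives $\mathrm{PoA}\leq\lambda/(1-\omega(\lambda))$.

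The main obstacle is the feasibility of $\induced'$ in the mixed-autonomy setting, where flow conservation couples human and autonomous demands through $\degauto_\OD$; this is precisely where both the weak-Stackelberg structure of $\stack$ (for conservation on each O/D pair) and opt-restrictedness (for link-wise nonnegativity) are needed. Once these are in hand, the remainder is algebraic and faithfully mirrors the single-class derivation of~\cite{BONIFACI2010,harks2010stackelberg}, with the two-argument latency $\latency_\link(\cdot,\cdot)$ used in place of the one-argument latency throughout.
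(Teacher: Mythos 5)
Your argument is correct and is exactly the standard smoothness/variational-inequality derivation from \cite{BONIFACI2010,harks2010stackelberg} that the paper invokes without reproving; the Wardrop comparison against $\induced'$, the link-wise rearrangement of the definition of $\omega_\link(\lambda)$, and the final rearrangement using $\omega(\lambda)<1$ all go through. The only point to tighten is that feasibility of the comparison flow must be certified at the path level rather than link by link (link-wise nonnegativity of $\autoopt_\link+\humanopt_\link-\stack_\link$ alone does not yield a path decomposition for a general opt-restricted strategy), but for the weak SCALE strategy used here one simply takes $\induced'_\path=(1-\degauto)(\autoopt_\path+\humanopt_\path)$, which is manifestly a feasible human path flow.
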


To simplify the above approach for the mixed-autonomy setting, we introduce some additional definitions. 
First, for each link $\link \in \linkset$, we characterize the ratio of optimal flow to induced flow as
\begin{equation} \label{eq:gamma}
    \gamma_\link := \frac{\autoopt_\link + \humanopt_\link}{\stack_\link + \induced_\link}.
\end{equation}
Similarly we define the ratio of latency functions
\begin{equation} \label{eq:beta}
    \beta_\link := \frac{\latency_\link(\stack_\link, \induced_\link) - \latency_\link(\autoopt, \humanopt)}{\latency_\link(\stack_\link, \induced_\link)}.
\end{equation}
These two definitions allow us to express (\ref{eq:omega_full}) as
\begin{equation} \label{eq:omega_link}
    \omega_\link(\lambda) = \sup_{\gamma_\link \geq 0} \gamma_\link \left[1+(\beta_\link-1)\lambda\right].
\end{equation}

Our approach, which applies to routing games with general mixed-autonomy latency functions, is to derive an upper bound on the price of anarchy for a given Stackelberg strategy as follows:
\begin{enumerate}
    \item Develop an upper bound on the ratio of latency functions $\beta_\link$ as a function of the ratio of flows $\gamma_\link$ dependent only on known link parameters.
    \item Determine the value of $\gamma_\link$ that corresponds to $\sup_{\gamma_\link \geq 0} \gamma_\link \left[1+(\beta_\link-1)\lambda\right]$ to express $\omega_\link$ as a function of $\lambda$.
    \item Calculate values of network parameters that provide the maximum value of $\omega_\link(\lambda)$ over all links, $\omega(\lambda)$.  
    \item Find conditions on network parameters that determine the feasible set $\Lambda$.
    \item Find the minimum value of $\lambda/(1-\omega(\lambda))$ achieved by $\lambda \in \Lambda$.  
\end{enumerate}

\section{Mixed-Autonomy Stackelberg Routing}\label{sec:stackelberg}
For this study, we restrict Stackelberg routing strategies to the class of affine mixed autonomy latency functions as given in~\eqref{eq:latency_functions}. We consider our Stackelberg strategy to be the mixed-autonomy SCALE strategy given by~\eqref{eq:scale_strategy} and derive upper bounds for on the price of anarchy for this strategy in a mixed-autonomy routing game. To this end, we utilize the $\lambda$-approach stated in Proposition \ref{prop:lambda_method} and its extension to the mixed-autonomy setting. We present our Propositions and Theorems in an order consistent with the steps described at the end of Section~\ref{sec:prior-work}.

\begin{proposition} \label{prop:beta_link}
For a mixed-autonomy routing game $(\network, \demand, \latency)$ with affine latency functions, network autonomy fraction $\degauto$, and mixed-autonomy SCALE strategy $\stack = \degauto(\autoopt + \humanopt)$, we have
\begin{equation} \label{eq:beta_link}
    \beta_\link(\gamma_\link) \leq 
    \begin{cases}
        \hfil 1 &
        \mathrm{for} \quad \gamma_\link = 0 \\
        \hfil 1 - \frac{1 - \degauto^*_\link(1-\degasym_\link)}{\gamma_\link^{-1} - \degauto(1-\degasym_\link)} &
        \mathrm{for} \quad 0 < \gamma_\link < \tilde{\gamma}_\link \\
        \hfil 0 &
        \mathrm{for} \quad \tilde{\gamma}_\link \leq \gamma_\link \leq 1/\degauto
    \end{cases}
\end{equation}
where $\tilde{\gamma}_\link = [1 + (\degauto - \degauto^*_\link)(1-\degasym_\link)]^{-1}$ and $\degauto^*_\link = \autoopt_\link/(\autoopt_\link+\humanopt_\link)$.
\end{proposition}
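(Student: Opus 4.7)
The plan is to compute $\beta_\link$ explicitly from its definition, using the SCALE strategy together with the affine latency structure, and then extract the three piecewise bounds by a monotonicity argument and a sign check.

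First, I would parametrize the link flows by $\gamma_\link$. Writing $T_\link := \autoopt_\link + \humanopt_\link$, the SCALE strategy gives $\stack_\link = \degauto\, T_\link$, and the definition of $\gamma_\link$ in~\eqref{eq:gamma} forces $\induced_\link = T_\link(\gamma_\link^{-1} - \degauto)$. The non-negativity of $\induced_\link$ is the source of the upper limit $\gamma_\link \leq 1/\degauto$ appearing in the statement. Using $\autoopt_\link = \degauto^*_\link T_\link$, $\humanopt_\link = (1-\degauto^*_\link)T_\link$, and $a_\link = \degasym_\link h_\link$, substitution into $\latency_\link(x,y) = a_\link x + h_\link y + b_\link$ yields
\begin{align*}
\latency_\link(\stack_\link, \induced_\link) &= h_\link T_\link \bigl[\gamma_\link^{-1} - \degauto(1-\degasym_\link)\bigr] + b_\link, \\
\latency_\link(\autoopt_\link, \humanopt_\link) &= h_\link T_\link \bigl[1 - \degauto^*_\link(1-\degasym_\link)\bigr] + b_\link.
\end{align*}
The constant $b_\link$ cancels in the numerator of $\beta_\link$. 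Introducing the shorthands $U := \gamma_\link^{-1} - \degauto(1-\degasym_\link)$ and $V := 1 - \degauto^*_\link(1-\degasym_\link)$, definition~\eqref{eq:beta} reduces to
\[
\beta_\link \;=\; \frac{h_\link T_\link (U - V)}{h_\link T_\link\, U + b_\link}.
\]

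Second, I would bound this ratio by a case split on the sign of $U - V$. The hypotheses $\degasym_\link \in (0,1]$ and $\gamma_\link \in (0, 1/\degauto]$ ensure $U \geq \degauto\, \degasym_\link > 0$ and $V \in (0,1]$, so the denominator is strictly positive and the fraction is monotone in $b_\link$. When $U > V$, the numerator is positive and the fraction is maximized over $b_\link \geq 0$ at $b_\link = 0$, giving $\beta_\link \leq (U-V)/U = 1 - V/U$, which is exactly the middle-case bound. When $U \leq V$, the numerator is non-positive, so $\beta_\link \leq 0$. Solving $U = V$ explicitly produces the transition point $\tilde{\gamma}_\link = [1 + (\degauto - \degauto^*_\link)(1-\degasym_\link)]^{-1}$ stated in the proposition, so the condition $U > V$ is equivalent to $\gamma_\link < \tilde{\gamma}_\link$. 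The degenerate case $\gamma_\link = 0$ forces $T_\link = 0$ (otherwise the defining ratio would be positive), in which case $\beta_\link = h_\link \induced_\link / (h_\link \induced_\link + b_\link) \leq 1$ follows at once.

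The main obstacle is the bookkeeping in the first step: the dependence on the six parameters $\degauto, \degauto^*_\link, \degasym_\link, T_\link, \gamma_\link, b_\link$ has to be coerced into the compact $(U,V)$ form before the piecewise structure becomes visible. Once that reduction is achieved, the case analysis is an elementary monotonicity argument in $b_\link$ combined with a sign check on $U - V$; no nontrivial inequality or convex optimization is needed.
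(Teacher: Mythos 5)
Your proposal is correct and follows essentially the same route as the paper's proof: compute $\beta_\link$ explicitly from the affine form, locate the sign change of the numerator at $\tilde{\gamma}_\link$, bound the positive case by dropping $b_\link$ from the denominator, and handle $\gamma_\link = 0$ separately. Your $(U,V)$ normalization is just a cleaner bookkeeping of the identical algebra, and your explicit check that $U \geq \degauto\,\degasym_\link > 0$ on $(0,1/\degauto]$ is a small bonus the paper leaves implicit.
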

\begin{proof}

We prove this using the special structure of our latency functions. Using~\eqref{eq:latency_functions} with the definitions~\eqref{eq:beta} and~\eqref{eq:degree_asymmetry}, we have
\begin{equation}
    \beta_\link = \frac{h_\link[(\degasym_\link\stack_\link+\induced_\link) - (\degasym_\link\autoopt_\link+\humanopt_\link)]}{h_\link(\degasym_\link\stack_\link+\induced_\link) + b_\link}.
\end{equation}

First, let us consider $\gamma_\link = 0$. In this case, $\autoopt_\link + \humanopt_\link = 0$; therefore, $\autoopt_\link= \humanopt_\link = 0$ since the vehicle flows are nonnegative. Thus, if we choose $\stack_\link = \degauto(\autoopt_\link+\humanopt_\link)$ we have $\stack_\link = 0$ as well and $\beta(0) = h_\link\induced_\link/(h_\link\induced_\link+b_\link) \leq 1$.

When $\gamma_\link \neq 0$ the ratio of latency functions $\beta_\link$ may be piecewise bound by considering the sign of the numerator argument, that is, by determining when $(\degasym_\link\stack_\link+\induced_\link) - (\degasym_\link\autoopt_\link+\humanopt_\link) = 0$. We can rewrite $\degasym_\link\stack_\link + \induced_\link = \degasym_\link\autoopt_\link + \humanopt_\link$ as
\begin{equation}
    (\stack_\link+\induced_\link) - (1-\degasym_\link)\stack_\link = (\autoopt_\link+\humanopt_\link) - (1-\degasym_\link)\autoopt_\link.
\end{equation}
Additionally, $\autoopt_\link + \humanopt_\link \neq 0$ since $\gamma_\link \neq 0$. Using the definition of $\gamma_\link$~\eqref{eq:gamma}, the above statement implies
\begin{equation}
    (\autoopt_\link+\humanopt_\link)/\gamma_\link - (1-\degasym_\link)\stack_\link = (\autoopt_\link+\humanopt_\link) - (1-\degasym_\link)\autoopt_\link
\end{equation}
and the numerator of $\beta_\link$ is equal to zero for
\begin{equation}
    \tilde{\gamma} := \left[1 + \frac{\stack_\link-\autoopt_\link}{\autoopt_\link+\humanopt_\link}(1-\degasym_\link)\right]^{-1}.
\end{equation}
Thus, if we take our Stackelberg strategy to be $\stack_\link = \degauto(\autoopt_\link + \humanopt_\link)$ we have $\tilde{\gamma}_\link = [1 + (\degauto - \degauto^*_\link)(1-\degasym_\link)]^{-1}$ where  $\degauto^*_\link = \autoopt_\link/(\autoopt_\link+\humanopt_\link)$ is the \textit{optimal flow degree of autonomy} on link $\link$. We now proceed to develop bounds on $\beta_\link$ based on the specific regions of $\gamma_\link$ relative to $\tilde{\gamma}_\link$:

\textbf{Case 1.} If $0 < \gamma_\link < \tilde{\gamma}_\link$, or equivalently $(\degasym_\link\stack_\link+\induced_\link) - (\degasym_\link\autoopt_\link+\humanopt_\link) > 0$, since $h_\link>0, b_\link \geq 0$, it follows that
\begin{equation}
    \beta_\link \leq \frac{(\degasym_\link\stack_\link+\induced_\link) - (\degasym_\link\autoopt_\link+\humanopt_\link)}{\degasym_\link\stack_\link + \induced_\link}.
\end{equation}
Since we once again have that $\autoopt_\link + \humanopt_\link \neq 0$ we may simplify the above to
\begin{equation}
    \beta_\link \leq 1 - \frac{1-\degauto^*_\link(1-\degasym_\link)}{\gamma_\link^{-1}-(1-\degasym_\link)\stack_\link/(\autoopt_\link+\humanopt_\link)}.
\end{equation}
If we again take $\stack_\link = \degauto(\autoopt_\link + \humanopt_\link)$, then
\begin{equation}
    \beta_\link \leq 1 - \frac{1 - \degauto^*_\link(1-\degasym_\link)}{\gamma_\link^{-1} - \degauto(1-\degasym_\link)}.
\end{equation}

\textbf{Case 2.} If $\tilde{\gamma}_\link \leq \gamma_\link \leq 1/\degauto$, or equivalently $(\degasym_\link\stack_\link+\induced_\link) - (\degasym_\link\autoopt_\link+\humanopt_\link) \leq 0$, it immediately follows that $\beta_\link(\gamma_\link) \leq 0$ since $h_\link(\degasym_\link\stack_\link+\induced_\link) + b_\link > 0$.
\end{proof}

While Proposition \ref{prop:beta_link} provides an upper bound on the ratio of latency functions, this value depends on an unknown network-based parameter $\degauto^*_{\link}$ and may not be determined a priori. We may relax the derived upper bound in such a way to be independent of the optimal flow parameters $\autoopt_\link, \humanopt_\link$ as follows.

\begin{proposition} \label{prop:beta_link_relaxed}
For a mixed-autonomy routing game $(\network, \demand, \latency)$ with affine latency functions, network autonomy fraction $\degauto$, and mixed-autonomy SCALE strategy $\stack = \degauto(\autoopt + \humanopt)$, the upper bound in (\ref{eq:beta_link}) may be relaxed to
\begin{equation} \label{eq:beta_link_relaxed}
    \beta_\link(\gamma_\link) \leq 
    \begin{cases}
        \hfil 1 &
        \mathrm{for} \quad \gamma_\link = 0 \\
        \hfil 1 - \frac{\degasym_\link}{\gamma_\link^{-1} - \degauto(1-\degasym_\link)} &
        \mathrm{for} \quad 0 < \gamma_\link < \gamma^+_\link \\
        \hfil 0 &
        \mathrm{for} \quad \gamma^+_\link \leq \gamma_\link \leq 1/\degauto
    \end{cases}
\end{equation}
where $\gamma^+_\link = [\degauto(1-\degasym_\link) + \degasym_\link]^{-1}$.
\end{proposition}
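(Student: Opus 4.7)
The plan is to take the link-dependent bound from Proposition \ref{prop:beta_link} and eliminate its dependence on the unknown parameter $\degauto^*_\link$ by separately relaxing (i) the numerator of the Case~2 expression and (ii) the threshold $\tilde{\gamma}_\link$ that separates Cases~2 and~3, then verifying the two relaxations remain consistent. The case $\gamma_\link = 0$ carries over unchanged, so I focus on $\gamma_\link > 0$.

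First I would relax the numerator. In the Case~2 bound $1 - (1-\degauto^*_\link(1-\degasym_\link))/(\gamma_\link^{-1} - \degauto(1-\degasym_\link))$, the denominator is strictly positive on $0 < \gamma_\link \leq 1/\degauto$ since $\gamma_\link^{-1} \geq \degauto \geq \degauto(1-\degasym_\link)$ (with strictness coming from $\degasym_\link>0$). The numerator $1 - \degauto^*_\link(1-\degasym_\link)$ is affine and decreasing in $\degauto^*_\link$ because $1-\degasym_\link \geq 0$, so on $\degauto^*_\link \in [0,1]$ it is minimized at $\degauto^*_\link = 1$, yielding the value $\degasym_\link$. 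Replacing the numerator by its minimum $\degasym_\link$ makes the subtracted fraction smaller and hence enlarges the bound, so the relaxed inequality $\beta_\link \leq 1 - \degasym_\link/(\gamma_\link^{-1} - \degauto(1-\degasym_\link))$ is still valid wherever Case~2 of Proposition~\ref{prop:beta_link} applies.

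Next I would relax the threshold. A short computation shows $\tilde{\gamma}_\link = [1 + (\degauto - \degauto^*_\link)(1-\degasym_\link)]^{-1}$ is monotonically nondecreasing in $\degauto^*_\link$, so its maximum on $[0,1]$ is attained at $\degauto^*_\link = 1$, where it equals $[1 - (1-\degauto)(1-\degasym_\link)]^{-1} = [\degauto(1-\degasym_\link) + \degasym_\link]^{-1} = \gamma^+_\link$. Consequently $\tilde{\gamma}_\link \leq \gamma^+_\link$ always. I would then split on $\gamma_\link$ relative to $\gamma^+_\link$. For $\gamma^+_\link \leq \gamma_\link \leq 1/\degauto$, we have $\gamma_\link \geq \tilde{\gamma}_\link$ for every admissible $\degauto^*_\link$, so Case~3 of Proposition~\ref{prop:beta_link} gives $\beta_\link \leq 0$ directly. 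For $0 < \gamma_\link < \gamma^+_\link$, the original proposition could place the link in either Case~2 or Case~3 depending on the (unknown) $\degauto^*_\link$; in Case~2 the numerator relaxation above yields the desired bound, and in Case~3 we have $\beta_\link \leq 0$ so it suffices to check that the relaxed right-hand side is nonnegative.

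The main obstacle is exactly this last consistency check, i.e.\ verifying that extending the Case~2 form over the larger interval $[0,\gamma^+_\link)$ does not produce a bound that falls below $\beta_\link$ when Proposition~\ref{prop:beta_link} was originally in its Case~3 regime. The check reduces to showing $1 - \degasym_\link/(\gamma_\link^{-1} - \degauto(1-\degasym_\link)) \geq 0$, which rearranges to $\gamma_\link^{-1} \geq \degauto(1-\degasym_\link) + \degasym_\link$, i.e.\ $\gamma_\link \leq \gamma^+_\link$, exactly the hypothesis of the case. Once this algebraic identity lines up, the three pieces of~\eqref{eq:beta_link_relaxed} follow, and the proposition is proved.
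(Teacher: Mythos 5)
Your proof is correct and follows essentially the same route as the paper: both arguments observe that the Case~2 expression and the threshold $\tilde{\gamma}_\link$ are increasing in $\degauto^*_\link \in [0,1]$ and relax to the worst case $\degauto^*_\link = 1$, which yields the numerator $\degasym_\link$ and the threshold $\gamma^+_\link$. Your explicit consistency check that the relaxed expression is nonnegative precisely on $\gamma_\link \leq \gamma^+_\link$ (so that extending the Case~2 form over $[\tilde{\gamma}_\link, \gamma^+_\link)$ still dominates $\beta_\link \leq 0$) is a detail the paper leaves implicit, and it resolves exactly as you say.
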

\begin{proof}
Since $\tilde{\gamma}_\link$ and $1 - (1 - \degauto^*_\link(1-\degasym_\link))/(\gamma_\link^{-1} - \degauto(1-\degasym_\link))$ are both increasing in $\degauto^*_\link$ and $0 \leq \degauto^*_\link \leq 1$ we may conclude that $\tilde{\gamma}_\link \leq [\degauto(1-\degasym_\link)+\degasym_\link]^{-1}$ and
\begin{equation}
    1 - \frac{1 - \degauto^*_\link(1-\degasym_\link)}{\gamma_\link^{-1} - \degauto(1-\degasym_\link)} \leq 1 - \frac{\degasym_\link}{\gamma_\link^{-1} - \degauto(1-\degasym_\link)}.
\end{equation}

\end{proof}

Now, we utilize this relaxed bound on $\beta_\link$ in order to determine an upper bound on the link parameter $\omega_\link$.

\begin{proposition} For a mixed-autonomy routing game $(\network, \demand, \latency)$ with affine latency functions, network autonomy fraction $\degauto$, and mixed-autonomy SCALE strategy $\stack = \degauto(\autoopt + \humanopt)$, we have
\begin{equation} \label{eq:omega_link_lambda}
    \omega_\link(\lambda) \leq 
    \begin{cases}
        \hfil \frac{1}{\degauto}(1-\lambda) &
        \mathrm{for} \quad 0 \leq \lambda < \lambda^+_\link \\
        \hfil \frac{1-\Delta_\link(\lambda)}{\degauto(1-\degasym_\link)} - \frac{\degasym_\link(1-\Delta_\link(\lambda))^2}{\degauto^2(1-\degasym_\link)^2\Delta_\link(\lambda)}\lambda &
        \mathrm{for} \quad \lambda^+_\link \leq \lambda \leq 1
    \end{cases}
\end{equation}
where $\Delta_\link(\lambda) = \sqrt{\lambda\degasym_\link/(\degauto(1-\degasym_\link)+\lambda\degasym_\link)}$, and
\begin{equation}
    \lambda^+_{\link} = \frac{2\degasym_\link(1+\sqrt{1-\degauto}) - \degauto(1-\degasym_\link)\degasym_\link}{\degauto(1-\degasym_\link)^2 + 4\degasym_\link}.
\end{equation}
\end{proposition}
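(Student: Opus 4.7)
The plan is to apply the relaxed bound on $\beta_\link(\gamma_\link)$ from Proposition~\ref{prop:beta_link_relaxed} to the expression $\omega_\link(\lambda) = \sup_{\gamma_\link \geq 0} \gamma_\link[1+(\beta_\link-1)\lambda]$ given in~\eqref{eq:omega_link}, and then maximize the resulting upper bound piecewise over the three regions of $\gamma_\link$ identified in~\eqref{eq:beta_link_relaxed}. At $\gamma_\link=0$ the quantity is $0$, and on the interval $[\gamma^+_\link,\,1/\degauto]$ the bound $\beta_\link\leq 0$ yields $\gamma_\link[1+(\beta_\link-1)\lambda]\leq \gamma_\link(1-\lambda)$, which (for $\lambda\le 1$) is maximized at the right endpoint $1/\degauto$, producing the first case $(1-\lambda)/\degauto$.

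The substantive work is on $(0,\gamma^+_\link)$. Writing $c := \degauto(1-\degasym_\link)$ for brevity and inserting the middle branch of~\eqref{eq:beta_link_relaxed}, the quantity to maximize is bounded above by
\[
f(\gamma_\link) := \gamma_\link - \frac{\degasym_\link\,\lambda\,\gamma_\link^{\,2}}{1-c\gamma_\link}.
\]
I would differentiate and set $f'(\gamma_\link)=0$ to obtain $(1-c\gamma_\link)^2 = \degasym_\link\lambda\,\gamma_\link(2-c\gamma_\link)$; the substitution $u = 1 - c\gamma_\link$ (so $\gamma_\link = (1-u)/c$ and $2-c\gamma_\link = 1+u$) collapses this to $u^2(c+\degasym_\link\lambda) = \degasym_\link\lambda$, whose positive root is exactly $u = \Delta_\link(\lambda)$. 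Hence the interior critical point is $\gamma^*_\link = (1-\Delta_\link(\lambda))/c$, and back-substitution into $f$ reproduces the second branch of~\eqref{eq:omega_link_lambda}. Concavity of $f$ on the relevant interval makes this critical point a maximum.

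It then remains, for each $\lambda$, to select the larger of the two candidate maxima $(1-\lambda)/\degauto$ and $f(\gamma^*_\link)$, and to verify that $\gamma^*_\link$ lies in $(0,\gamma^+_\link)$ whenever the interior branch is used. Both conditions reduce to one inequality determining when the two pieces agree; the crossover is the value $\lambda^+_\link$. The main obstacle will be showing that solving $(1-\lambda)/\degauto = f(\gamma^*_\link)$ yields the stated closed form. My plan is to clear denominators and use $\Delta_\link(\lambda)^2 = \lambda\degasym_\link/(c+\lambda\degasym_\link)$ to trade $\lambda$ for $\Delta_\link$, obtaining a quadratic in $\Delta_\link$ whose discriminant simplifies via the identity $1-\degasym_\link(2-\degasym_\link) = (1-\degasym_\link)^2$ to $(1-\degasym_\link)^2(1-\degauto)$. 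The factor $(1-\degasym_\link)\sqrt{1-\degauto}$ extracted from this square root is precisely what supplies the $\sqrt{1-\degauto}$ appearing in $\lambda^+_\link$, and converting the selected root of $\Delta_\link$ back to $\lambda$ should deliver the stated formula.
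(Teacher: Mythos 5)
Your proposal is correct and follows essentially the same route as the paper: bound $\beta_\link$ by Proposition~\ref{prop:beta_link_relaxed}, take the supremum of $\gamma_\link[1+(\beta_\link-1)\lambda]$ separately on $[\gamma^+_\link,1/\degauto]$ (endpoint $1/\degauto$) and on $(0,\gamma^+_\link)$ (interior critical point $\gamma^*_\link=(1-\Delta_\link(\lambda))/(\degauto(1-\degasym_\link))$), and locate the crossover $\lambda^+_\link$ by equating the two branch maxima. The only detail you gloss over is that the quadratic in $\Delta_\link$ has two roots $\lambda^\pm_\link$, and one must check (as the paper does via the threshold $\lambda_{\link,1}$ at which $\gamma^*_\link=\gamma^+_\link$) that $\lambda^-_\link$ falls where the interior critical point leaves $(0,\gamma^+_\link)$ — so that the middle branch's supremum there is the boundary value $(1-\lambda)/(\degauto(1-\degasym_\link)+\degasym_\link)<(1-\lambda)/\degauto$ and only $\lambda^+_\link$ is a genuine crossover.
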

\begin{proof}
We prove this by calculating the value of $\gamma_\link$, dependent on $\lambda$, which maximizes the upper bound on $\omega_\link(\lambda)$.

Given (\ref{eq:omega_link}) and (\ref{eq:beta_link_relaxed}) we have that 
\begin{equation}
    \omega_\link(\lambda, \gamma_\link) \leq 
    \begin{cases}
        \hfil \gamma_\link &
        \gamma_\link = 0 \\
        \gamma_\link(1 - \frac{\degasym_\link}{\gamma_\link^{-1} - \degauto(1-\degasym_\link)}\lambda) &
        \mathrm{for} \quad 0 < \gamma_\link < \gamma^+_\link \\
        \hfil \gamma_\link(1-\lambda) &
        \mathrm{for} \quad \gamma^+_\link \leq \gamma_\link \leq 1/\degauto
    \end{cases}.
\end{equation}
In order to determine an upper bound for $\omega_\link(\lambda) = \sup_{\gamma_\link \in [0,1/\degauto]} \omega_\link(\lambda, \gamma_\link)$ we define 
\begin{equation}
    \omega_{\link,1}(\lambda,\gamma_\link) := \gamma_\link\left(1 - \frac{\degasym_\link}{\gamma_\link^{-1} - \degauto(1-\degasym_\link)}\lambda\right)
\end{equation}
 and 
\begin{equation}
     \omega_{\link,2}(\lambda,\gamma_\link) := \gamma_\link(1-\lambda).
\end{equation}
Additionally, take
\begin{equation}
    \omega_{\link,1}(\lambda) := \sup_{\gamma_\link \in (0,\gamma^+_\link)} \omega_{\link,1}(\lambda, \gamma_\link)
\end{equation}
and
\begin{equation}
    \omega_{\link,2}(\lambda) := \sup_{\gamma_\link \in [\gamma^+_\link, 1/\degauto]} \omega_{\link,2}(\lambda, \gamma_\link).
\end{equation}
Therefore, 
\begin{equation} \label{eq:omega_max}
    \omega_\link(\lambda) \leq \max\{0, \omega_{\link,1}(\lambda), \omega_{\link,2}(\lambda)\}.
\end{equation}

Graphs of $\omega_{\link,1}(\lambda, \gamma_\link)$, $\omega_{\link,2}(\lambda, \gamma_\link)$, and their relative maxima are shown in Fig.~\ref{fig:omega vs gamma}, which provide intuition for the following arguments.

\begin{figure}
    \centering
    \includegraphics[width=.29\textwidth]{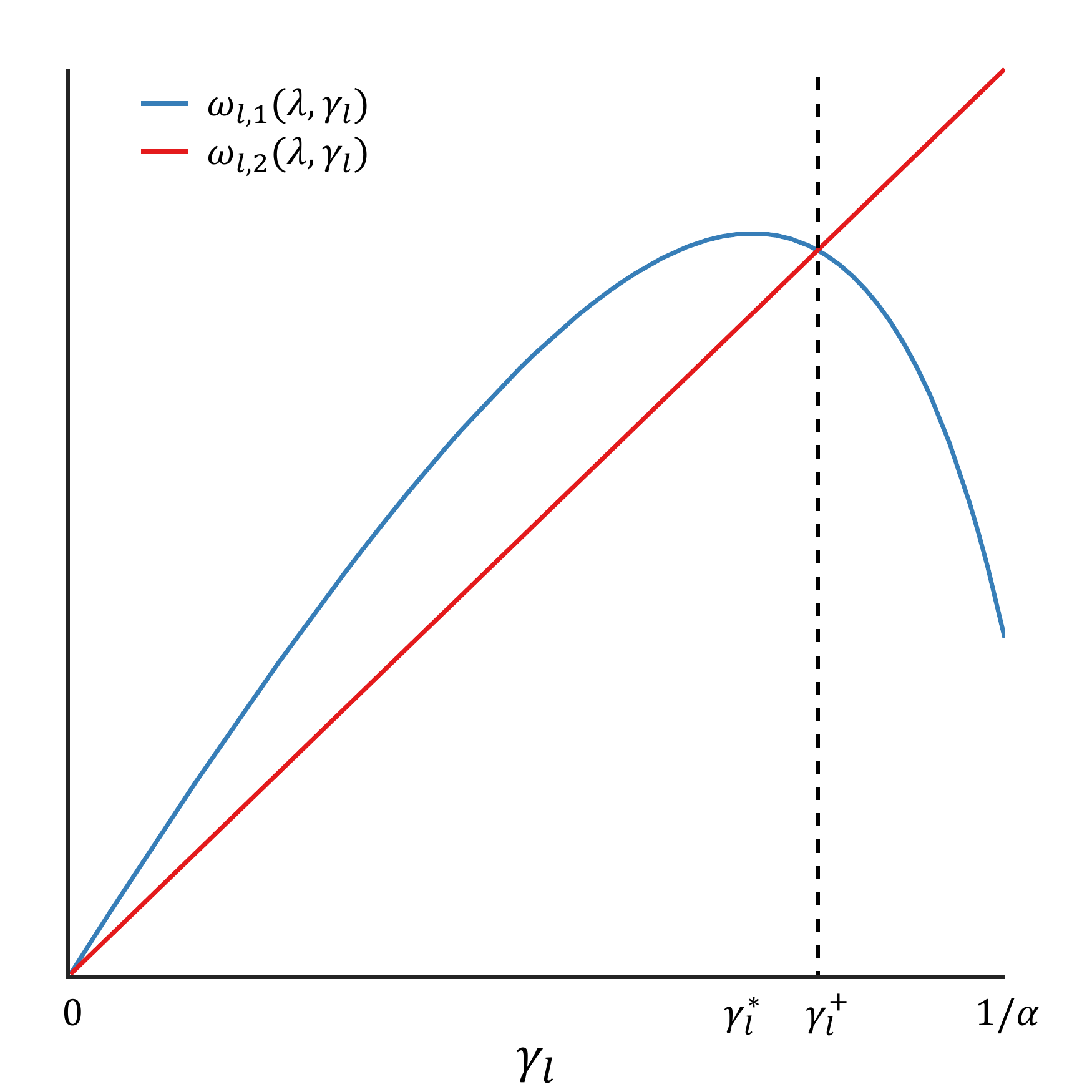}\hfil
    \caption{Characteristic plot of $\omega_{\link,1}(\lambda, \gamma_\link)$ and $\omega_{\link,2}(\lambda, \gamma_\link)$ for fixed $\degauto$, $\degasym_\link$, and $\lambda$. Clearly the supremum of $\omega_{\link,1}(\lambda, \gamma_\link)$ on $\gamma_\link(0,\gamma^+_\link)$ occurs at $\gamma_\link = \gamma^*_\link$ or in the limit $\gamma \to \gamma^+_\link$, dependent on network parameters. Additionally, the supremum of $\omega_{\link,2}(\lambda, \gamma_\link)$ occurs at $\gamma_\link = 1/\degauto$.}
    \label{fig:omega vs gamma}
\end{figure}

We will first turn our attention to calculating $\omega_{\link,1}(\lambda)$. Since $\omega_{\link,1}(\lambda, \gamma_\link)$ is continuous we have that the supremum either occurs in the limiting boundary values or critical values of $\gamma_\link$. We can immediately determine $\lim_{\gamma_\link \to 0} \omega_{\link,1}(\lambda, \gamma_\link) = 0$ and $\lim_{\gamma_\link \to \gamma^+_\link} \omega_{\link,1}(\lambda, \gamma_\link) = (1-\lambda)/(\degauto(1-\degasym_\link)+\degasym_\link)$.
Note that $\lim_{\gamma_\link \to \gamma^+_\link} \omega_{\link,1}(\lambda, \gamma_\link) \geq 0$ for all $\lambda \in [0,1]$.

If $\degasym_\link \neq 1$, the critical points exist and are given as values of $\gamma_\link$ such that $\partial/\partial \gamma_\link [\omega_{\link,1}(\lambda, \gamma_\link)] = 0$. It can be shown that the only such possible value of $\gamma_\link$ in our region of interest is given by $\gamma^*_\link = (1 - \Delta_\link(\lambda))/(\degauto(1-\degasym_\link))$
where $\Delta_\link(\lambda) = \sqrt{\lambda\degasym_\link/(\degauto(1-\degasym_\link)+\lambda\degasym_\link)}$. 
Additionally, it may be shown that $\partial^2/\partial \gamma_\link^2[\omega_{\link,1}(\lambda, \gamma_\link)]$ is negative on $\gamma_\link \in (0, \gamma^+_\link)$, and thus $\gamma^*_\link$ corresponds to a local maximum. This maximum is given by
\begin{equation}
    \omega_{\link,1}(\lambda, \gamma^*_\link) = \frac{1-\Delta_\link(\lambda)}{\degauto(1-\degasym_\link)} - \frac{\degasym_\link(1-\Delta(\lambda))^2}{\degauto^2(1-\degasym_\link)^2\Delta_\link(\lambda)}\lambda.
\end{equation}
Now, we just need to determine whether or not $\gamma^*_\link \in (0, \gamma^+_\link)$. If $\gamma^+_\link = \gamma^*_\link$, then
\begin{equation}
    \frac{1}{\degauto(1-\degasym_\link)+\degasym_\link} = \frac{1 - \Delta_\link(\lambda)}{\degauto(1-\degasym_\link)}
\end{equation}
and, once again, since $\degauto \neq 0$ and $\degasym_\link \neq 1$ the above statement is satisfied by
\begin{equation}
    \lambda_{\link,1} := \frac{\degasym_\link}{\degauto(1-\degasym_\link)+2\degasym_\link}
\end{equation}
Therefore, it follows that
\begin{equation}
    \omega_{\link,1}(\lambda) \leq 
    \begin{cases}
        \hfil \frac{1-\lambda}{\degauto(1-\degasym_\link)+\degasym_\link} &
        \mathrm{for} \quad 0 \leq \lambda \leq \lambda_{\link,1} \\
        \hfil \frac{1}{\degauto(1-\degasym_\link)} + \frac{2\degasym_\link(1-\Delta^{-1}_\link(\lambda))}{\degauto^2(1-\degasym_\link)^2}\lambda &
        \mathrm{for} \quad \lambda_{\link,1} < \lambda \leq 1.
    \end{cases}
\end{equation}

Since $\omega_{\link,2}(\lambda, \gamma_\link)$ is an increasing linear function of $\gamma_\link$, it immediately follows that
\begin{equation}
    \omega_{\link,2}(\lambda) = \frac{1}{\degauto}(1-\lambda).
\end{equation}


Now that we have expressions for $\omega_{\link,1}(\lambda)$ and $\omega_{\link,2}(\lambda)$, we may proceed to find $\omega_\link(\lambda)$ by determining the piecewise maximum as in~\eqref{eq:omega_max}. Fig.~\ref{fig:omega vs lambda} shows the relationship between $\omega_{\link,1}(\lambda)$ and $\omega_{\link,2}(\lambda)$.

We see that $\omega_{\link,1}(\lambda)$ on $\lambda \in [0,\lambda_{\link,1}]$ shares the same linear form with $\omega_{\link,2}(\lambda)$, and since $[\degauto(1-\degasym_\link) + \degasym_\link]^{-1} < 1/\degauto$, we have that $\omega_{\link,1}(\lambda) < \omega_{\link,2}(\lambda) \quad \mathrm{for} \quad 0 \leq \lambda < \lambda_{\link,1}$.

To determine the value of  $\omega_\link(\lambda)$ for $\lambda \in [\lambda_{\link,1},1]$ we can examine the intersection points of $\omega_{\link,1}(\lambda)$ and $\omega_{\link,2}(\lambda)$ in $[\lambda_{\link,1},1]$. More specifically, we are interested in values of $\lambda$ such that $\omega_{\link,1}(\lambda, \gamma^*_\link) = \omega_{\link,2}(\lambda)$. The only two values for which this occur are given as
\begin{equation}
    \lambda^-_\link = \frac{2\degasym_\link(1-\sqrt{1-\degauto}) - \degauto(1-\degasym_\link)\degasym_\link}{\degauto(1-\degasym_\link)^2 + 4\degasym_\link}
\end{equation}
and
\begin{equation}
    \lambda^+_\link = \frac{2\degasym_\link(1+\sqrt{1-\degauto}) - \degauto(1-\degasym_\link)\degasym_\link}{\degauto(1-\degasym_\link)^2 + 4\degasym_\link}.
\end{equation}
Moreover, it may be shown that for $\degauto \in (0,1)$ and $\degasym_\link \in (0,1]$, $\lambda^-_\link \in (0,\lambda_{\link,1})$ and $\lambda^+_\link \in (\lambda_{\link,1},1)$. Now, since $\omega_{\link,1}(1) = (1-\Delta_\link(1))/(\degauto(1-\degasym_\link)) > 0$ and $\omega_{\link,2}(1) = 0$, we have that $\omega_{\link,1}(1) > \omega_{\link,2}(1)$. Finally, since $\omega_{\link,1}(\lambda_{\link,1}) < \omega_{\link,2}(\lambda_{\link,1})$ and $\omega_{\link,1}(1) > \omega_{\link,2}(1)$ with only one intersection point in $(\lambda_{\link,1},1)$ given by $\lambda = \lambda^+_\link$, it must follow that
\begin{equation}
    \omega_{\link,1}(\lambda) < \omega_{\link,2}(\lambda) \quad \mathrm{for} \quad \lambda_{\link,1} \leq \lambda < \lambda^+_\link
\end{equation}
and
\begin{equation}
    \omega_{\link,1}(\lambda) \geq \omega_{\link,2}(\lambda) \quad \mathrm{for} \quad \lambda^+_\link \leq \lambda \leq 1.
\end{equation}

\begin{figure}
    \centering
    \includegraphics[width=.29\textwidth]{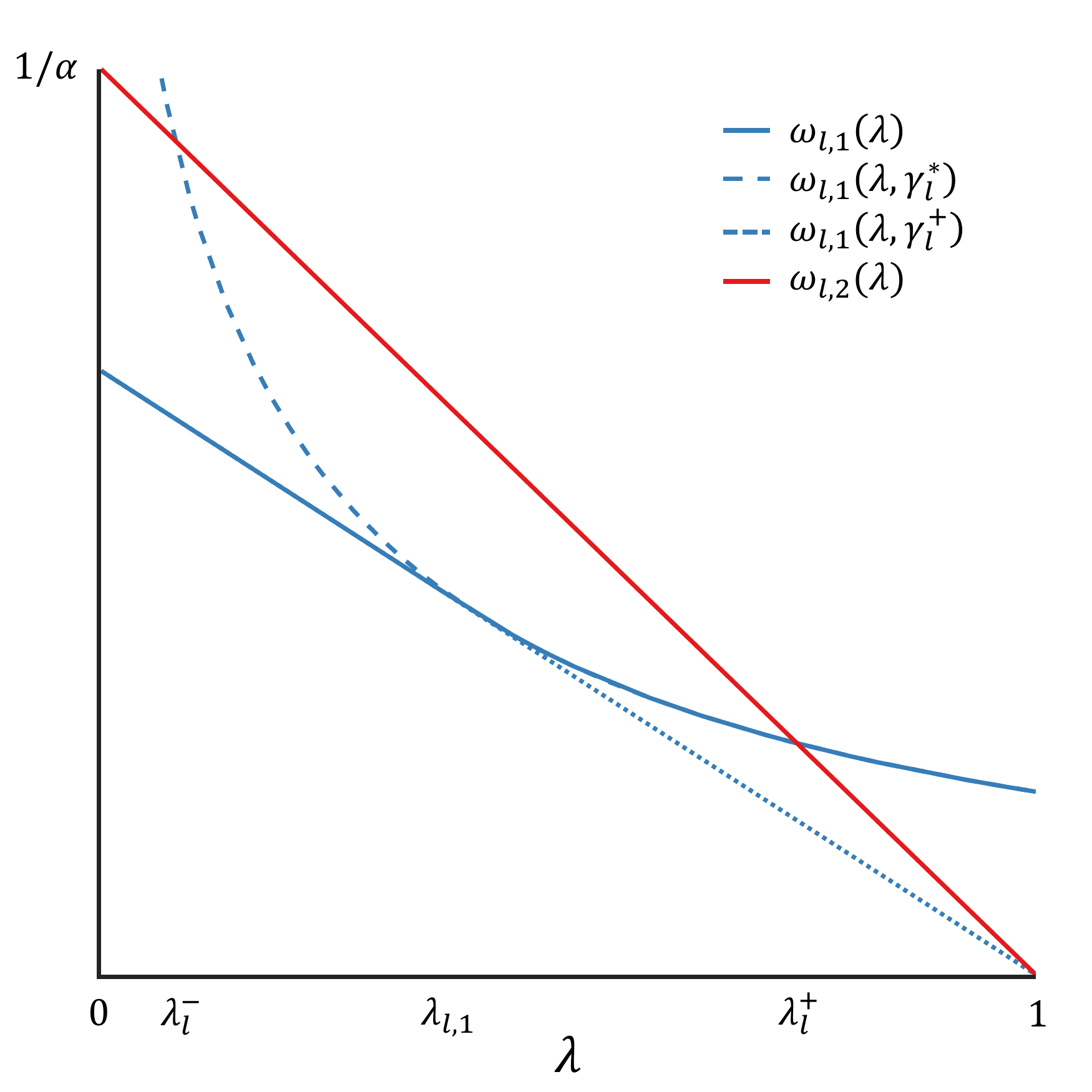}\hfil
    \caption{Characteristic plot of $\omega_{\link,1}(\lambda)$ and $\omega_{\link,2}(\lambda)$ for fixed $\degauto$ and $\degasym_\link$. Here it can be seen that the intersection point $\lambda^-_\link$ lies in $(0,\lambda_{\link,1})$ and the intersection point $\lambda^+_\link$ lies in $(\lambda_{\link,1},1)$. This leads to the conclusion that $\omega_\link(\lambda) = \omega_{\link,2}(\lambda)$ for $\lambda \in [0,\lambda^+_\link)$ and $\omega_\link(\lambda) = \omega_{\link,1}(\lambda)$ for $\lambda \in [\lambda^+_\link,1]$.}
    \label{fig:omega vs lambda}
\end{figure}

\end{proof}

Up to this point we have determined the value of the link flow ratio $\gamma_\link = (\autoopt_\link + \humanopt_\link)/(\stack_\link + \induced_\link)$ that maximizes the upper bound on $\omega_\link(\lambda)$. Ultimately, in order to calculate bounds on the price of anarchy, we want to find $\omega(\lambda) = \max_{\link \in \linkset}\omega_\link(\lambda)$. The only link dependent parameter remaining in~\eqref{eq:omega_link_lambda} is $\degasym_\link$. That is, we would like to find the value $\degasym_\link$ for $\link \in \linkset$ that maximizes $\omega_\link(\lambda)$. 
It can be shown that if we take $\degasym_\link = \degasym$ for all $\link \in \linkset$ where
\begin{equation}
    \degasym = \min_{\link \in \linkset} \degasym_\link
\end{equation}
we recover
\begin{equation} \label{eq:omega_lambda}
    \omega(\lambda) \leq 
    \begin{cases}
        \hfil \frac{1}{\degauto}(1-\lambda) &
        \mathrm{for} \quad 0 \leq \lambda \leq \lambda^+ \\
        \hfil \frac{1-\Delta(\lambda)}{\degauto(1-\degasym)} - \frac{\degasym(1-\Delta(\lambda))^2}{\degauto^2(1-\degasym)^2\Delta(\lambda)}\lambda &
        \mathrm{for} \quad \lambda^+ < \lambda \leq 1
    \end{cases}
\end{equation}
where $\Delta(\lambda) = \sqrt{\lambda\degasym/(\degauto(1-\degasym)+\lambda\degasym)}$ and \begin{equation}
    \lambda^+ = \frac{2\degasym(1+\sqrt{1-\degauto}) - \degauto(1-\degasym)\degasym}{\degauto(1-\degasym)^2 + 4\degasym}.
\end{equation}

In order to implement the $\lambda$-approach detailed in Proposition \ref{prop:lambda_method}, we need to determine the set $\Lambda$ given by (\ref{eq:Lambda_set}). We can express this set in terms of $\lambda_{\omega_1} := \{\lambda \mid \omega_1(\lambda) = 1\}$ and $\lambda_{\omega_2} := \{\lambda \mid \omega_2(\lambda) = 1\}$:
\begin{equation}
    \lambda_{\omega_1} = \frac{(1-\degauto(1-\degasym))^2}{4 \degasym}
\end{equation}
and
\begin{equation}
    \lambda_{\omega_2} = 1-\degauto.
\end{equation}
Since we know the functional form of the $\omega$ parameter given by (\ref{eq:omega_lambda}), we can determine that there are three separate cases for the feasible set $\Lambda$:

\textbf{Case 1.} If $\omega(1) \geq 1$, then $\Lambda = \emptyset$ and no value of $\lambda \in [0,1]$ yields an admissible bound on the price of anarchy. This condition is equivalent to $\omega_1(1) \geq 1$, which is true for $\degauto \leq \degauto_0$ where $\degauto_0 := (1-2\sqrt{\degasym})/(1-\degasym)$.

\textbf{Case 2.} If $\omega(\lambda^+) > 1$ and $\omega(1) < 1$, then $\omega(\lambda) < 1$ for $\lambda > \lambda_{\omega_1}$ and $\Lambda = (\lambda_{\omega_1}, 1]$. This case holds for $\degauto < \tilde{\degauto}$ and $\degauto > \degauto_0$ where $\tilde{\degauto} := (1-2\sqrt{\degasym})/(1-\sqrt{\degasym})^2$.

\textbf{Case 3.} If $\omega(\lambda^+) \leq 1$, then $\omega(\lambda) < 1$ for $\lambda > \lambda_{\omega_2}$ and $\Lambda = (\lambda_{\omega_2}, 1]$. Clearly, this is true for $\degauto \geq \tilde{\degauto}$.

Thus, we may describe the feasible sets $\Lambda$ based on constraint sets of $\degauto$. That is,
\begin{equation} \label{eq:Lambda}
    \Lambda = 
    \begin{cases}
        \hfil \emptyset & 
        \mathrm{for} \quad \degauto \in A_0 \\
        \hfil (\lambda_{\omega_1}, 1] &
        \mathrm{for} \quad \degauto \in A_{\omega_1} \\
        \hfil (\lambda_{\omega_2}, 1] &
        \mathrm{for} \quad \degauto \in A_{\omega_2}
    \end{cases}
\end{equation}
where
\begin{equation}
    A_0 := \{\degauto \mid \degauto \leq \degauto_0, \: 0 \leq \degauto \leq 1\}
\end{equation}
\begin{equation}
    A_{\omega_1} := \{\degauto \mid \degauto > \degauto_0, \: \degauto \leq \tilde{\degauto}, \: 0 \leq \degauto \leq 1\}
\end{equation}
\begin{equation}
    A_{\omega_2} := \{\degauto \mid \degauto > \tilde{\degauto}, \: 0 \leq \degauto \leq 1\}.
\end{equation}
\begin{remark}
    Since $\tilde{\degauto} \geq \degauto_0$ whenever both $\tilde{\degauto},\degauto_0 \in [0,1]$, the condition $\degauto > \tilde{\degauto}$ implies $\degauto > \degauto_0$.
\end{remark}

\begin{theorem} \label{th:PoA_constraint_sets}
For a mixed-autonomy routing game $(\network, \demand, \latency)$ with affine latency functions, network autonomy fraction $\degauto$, minimum degree of asymmetry $\degasym$, and mixed-autonomy SCALE strategy $\stack = \degauto(\autoopt + \humanopt)$, we have that the price of anarchy is bounded as a function of only $\degauto$ and $\degasym$ as follows:
\begin{equation}
    \mathrm{PoA} \leq 
    \begin{cases}
        \hfil{\infty} &
        \mathrm{for} \quad \degauto \in A_0 \\
        \hfil \mathrm{PoA}_{\omega_1}(1) &
        \mathrm{for} \quad \degauto \in A_1 \\
        \hfil \mathrm{PoA}_{\omega_1}(\lambda^*) &
        \mathrm{for} \quad \degauto \in A_{\lambda^*} \\
        \hfil \mathrm{PoA}_{\omega_1}(\lambda^+) & 
        \mathrm{for} \quad \degauto \in A_{\lambda^+} \\
    \end{cases},
\end{equation}
where
\begin{equation} 
    \resizebox{0.43\textwidth}{!}{
    $\mathrm{PoA}_{\omega_1}(1) = \frac{\degauto^2(1-\degasym)^2 - \degauto(1-\degasym) -2\degasym - 2\sqrt{\degasym^2+\degauto(1-\degasym)\degasym}}{\left(1-\degauto(1-\degasym)\right)^2 - 4\degasym}$,
    }
\end{equation}
\begin{equation}
    \mathrm{PoA}_{\omega_1}(\lambda^*) = \frac{1-\degauto(1-\degasym)}{\degasym},
\end{equation}
\begin{equation}  
    \resizebox{0.43\textwidth}{!}{
    $\mathrm{PoA}_{\omega_1}(\lambda^+) = \frac{2\degauto\degasym(1+\sqrt{1-\degauto}) - \degauto^2(1-\degasym)\degasym}{\degauto^2(1-\degasym)^2 + \degauto(5\degasym-1) - 2\degasym(1-\sqrt{1-\degauto})}$,
    }
\end{equation}
and
\begin{equation}
    A_0 = \{\degauto \mid \degauto \leq \degauto_0, \: 0 \leq \degauto \leq 1\},
\end{equation}
\begin{equation}
    A_1 = \{\degauto \mid \degauto > \degauto_0, \: \degauto < \degauto_1, \: 0 \leq \degauto \leq 1\},
\end{equation}
\begin{equation}
    A_{\lambda^*} = \{\degauto \mid \degauto \geq \degauto_1, \: \degauto < \degauto_2, \: 0 \leq \degauto \leq 1\},
\end{equation}
\begin{equation}
    A_{\lambda^+} = \{\degauto \mid \degauto \geq \degauto_2, \: 0 \leq \degauto \leq 1\},
\end{equation}
with
\begin{equation}
    \degauto_0 = \frac{1-2\sqrt{\degasym}}{1-\degasym}, \; \degauto_1 = \frac{1-2\sqrt{\degasym}}{(1-\sqrt{\degasym})^2}, \; \degauto_2 = \frac{1-2\degasym}{(1-\degasym)^2}.
\end{equation}
\end{theorem}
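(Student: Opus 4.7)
The plan is to apply Proposition~\ref{prop:lambda_method}, which reduces the task to minimizing $g(\lambda) := \lambda/(1-\omega(\lambda))$ over $\Lambda$, using the explicit form of $\omega(\lambda)$ in~\eqref{eq:omega_lambda} and the three cases of $\Lambda$ listed in~\eqref{eq:Lambda}. The case $\degauto \in A_0$ is immediate since $\Lambda = \emptyset$, so the $\lambda$-approach yields only the vacuous bound $\mathrm{PoA} \leq \infty$. For the remaining $\degauto$, I split the feasible interval at the transition point $\lambda^+$: on $[\lambda_{\omega_i},\lambda^+]$ the relevant branch is $\omega_2(\lambda)=(1-\lambda)/\degauto$, while on $[\lambda^+,1]$ it is the nonlinear branch $\omega_1(\lambda)$.

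On the linear branch, a direct computation gives $g_2(\lambda)=\lambda\degauto/(\lambda-(1-\degauto))$, whose derivative has sign $\degauto-1<0$, so $g_2$ is strictly decreasing and its infimum on this sub-interval is $g_2(\lambda^+)$, which by continuity equals $g_1(\lambda^+)$. Consequently, in Cases~2 and~3 of~\eqref{eq:Lambda} the overall infimum of $g$ on $\Lambda$ coincides with $\inf_{\lambda\in[\lambda^+,1]} g_1(\lambda)$, and the problem is reduced to minimizing $g_1$ on $[\lambda^+,1]$.

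The cleanest route for the nonlinear branch is to reparametrize via $D=\Delta(\lambda)$, so that $\lambda=\degauto(1-\degasym)D^2/[\degasym(1-D^2)]$. After a short simplification one obtains the compact form $\omega_1 = (1-D)/[\degauto(1-\degasym)(1+D)]$, which renders $g_1$ a rational function of $D$. Setting $(\log g_1)'=0$ and clearing denominators collapses the stationarity condition to a single linear equation in $D$, whose unique root in the feasible range is $D^* = 1-\degauto(1-\degasym)$. Back-substituting yields $g_1(\lambda^*)=(1-\degauto(1-\degasym))/\degasym$, which matches $\mathrm{PoA}_{\omega_1}(\lambda^*)$, and a second-derivative sign check confirms that $\lambda^*$ is an interior minimum of $g_1$.

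The final case split is dictated by the position of $\lambda^*$ relative to $[\lambda^+,1]$: if $\lambda^*\geq 1$ the minimum of $g_1$ is at $\lambda=1$ (regime $A_1$); if $\lambda^*\in[\lambda^+,1]$ the minimum is $g_1(\lambda^*)$ (regime $A_{\lambda^*}$); and if $\lambda^*\leq\lambda^+$ the minimum is $g_1(\lambda^+)$ (regime $A_{\lambda^+}$). Evaluating $g_1$ at the two endpoints and solving $\lambda^*=1$ and $\lambda^*=\lambda^+$ for $\degauto$ in closed form yield the explicit boundaries $\degauto_1,\degauto_2$ and the closed-form bounds $\mathrm{PoA}_{\omega_1}(1)$ and $\mathrm{PoA}_{\omega_1}(\lambda^+)$. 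The hardest step is this last piece of algebra: the nested square roots in $\lambda^+$ and in $\Delta(\lambda)$ make the boundary calculations delicate, but the $D$-substitution keeps everything tractable, since both thresholds become quadratic conditions once expressed in terms of $\degauto(1-\degasym)$ and $\sqrt{\degasym}$.
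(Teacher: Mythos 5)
Your overall strategy is the paper's: invoke Proposition~\ref{prop:lambda_method}, dispose of $A_0$ via $\Lambda=\emptyset$, and minimize $g(\lambda):=\lambda/(1-\omega(\lambda))$ over $\Lambda$ by locating the unique critical point $\lambda^*$ of the nonlinear branch and comparing it with the endpoints. Your substitution $D=\Delta(\lambda)$ is, however, a genuinely cleaner computation than the paper's direct differentiation of $\mathrm{PoA}_{\omega_1}(\lambda)$: the identity $\omega_1=(1-D)/[\degauto(1-\degasym)(1+D)]$ is correct, the stationarity condition really does collapse to the linear equation $D^*=1-\degauto(1-\degasym)$, and back-substitution reproduces both $\lambda^*=(1-\degauto(1-\degasym))^2/\bigl([2-\degauto(1-\degasym)]\degasym\bigr)$ and $\mathrm{PoA}_{\omega_1}(\lambda^*)=(1-\degauto(1-\degasym))/\degasym$. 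That part of the argument is an improvement worth keeping.

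The gap is the sentence asserting that in both Cases~2 and~3 of~\eqref{eq:Lambda} the infimum of $g$ over $\Lambda$ equals $\inf_{\lambda\in[\lambda^+,1]}g_1(\lambda)$. That reduction is justified only for $\degauto\in A_{\omega_2}$, where $\Lambda=(\lambda_{\omega_2},1]$ contains $\lambda^+$ and your monotonicity argument for $g_2$ applies. For $\degauto\in A_{\omega_1}$ one has $\omega(\lambda^+)>1$, so $\Lambda=(\lambda_{\omega_1},1]$ with $\lambda_{\omega_1}\geq\lambda^+$: the linear branch is entirely infeasible, $\lambda^+\notin\Lambda$, and $1-\omega_1(\lambda)\leq 0$ on part of $[\lambda^+,\lambda_{\omega_1}]$, so $\inf_{[\lambda^+,1]}g_1$ is not even a valid price-of-anarchy bound there. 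The paper treats this regime separately, using the blow-up of $g_1$ at $\lambda_{\omega_1}$ together with convexity to conclude the minimum is attained at $\lambda^*$ or at $1$. Your final trichotomy still lands on the stated theorem, but only because of two facts you never verify: first, $\lambda^*/\lambda_{\omega_1}=4/\bigl(2-\degauto(1-\degasym)\bigr)>1$, so the critical point is always feasible whenever $\lambda^*\leq 1$; second, the regime $\lambda^*\leq\lambda^+$, i.e.\ $\degauto\geq\degauto_2$, is contained in $A_{\omega_2}$ (this requires $\degauto_2\geq\tilde{\degauto}$), so the endpoint value $g_1(\lambda^+)$ is only ever selected when $\lambda^+$ actually lies in $\Lambda$. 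Both checks are short, but without them the unified reduction to $[\lambda^+,1]$ is unjustified in the $A_{\omega_1}$ regime, and the proof as written does not cover it.
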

\begin{proof}
    From Proposition~\ref{prop:lambda_method}, we can determine the minimum bound induced on the price of anarchy by solving the following optimization problem:
\begin{equation}
    \mathrm{PoA} \leq \inf_{\lambda \in \Lambda} \frac{\lambda}{1-\omega(\lambda)}
\end{equation}
Since the set $\Lambda$ is piecewise in $\degauto$ and $\omega$ is piecewise continuous in $\lambda$, we may proceed with the following cases:

\textbf{Case 1.} If $\degauto \in A_0$, we have that $\Lambda = \emptyset$ and no value of $\lambda \in [0,1]$ yields an admissible bound on the price of anarchy.

\textbf{Case 2.} If $\degauto \in A_{\omega_1}$, we have that $\mathrm{PoA} \leq \inf_{\lambda \in (\lambda_{\omega_1}, 1]} \mathrm{PoA}_{\omega_1}(\lambda)$
where we define $\mathrm{PoA}_{\omega_i}(\lambda) := \lambda/(1-\omega_i(\lambda))$ for $i=1,2$.
Since $\lim_{\lambda \to \lambda_{\omega_1}} \mathrm{PoA}_{\omega_1}(\lambda) = \infty$, we have either the minimum occurs at the boundary value $\lambda = 1$ or a critical value $\lambda = \lambda^*$, which is defined as
\begin{equation}
    \lambda^* := \left\{\lambda \mathrel{\Big|} \frac{\partial}{\partial \lambda} \mathrm{PoA}_{\omega_1}(\lambda) = 0\right\}.
\end{equation}
The price of anarchy bound at the boundary is given by
\begin{equation} 
    \resizebox{0.43\textwidth}{!}{
        $\mathrm{PoA_{\omega_1}}(1) = \frac{\degauto^2(1-\degasym)^2 - \degauto(1-\degasym) - 2\degasym -2\sqrt{\degasym^2 + \degauto(1-\degasym)\degasym}}{(1-\degauto(1-\degasym))^2 - 4\degasym}$.
    }
\end{equation}
The unique value of $\lambda^*$ is given as
\begin{equation}
    \lambda^* = \frac{(1-\degauto(1-\degasym))^2}{[2-\degauto(1-\degasym)]\degasym}
\end{equation}
and results in a price of anarchy bound of
\begin{equation}
    \mathrm{PoA_{\omega_1}}(\lambda^*) = \frac{1 - \degauto(1-\degasym)}{\degasym}.
\end{equation}

Since $\mathrm{PoA}_{\omega_1}$ is a convex function of $\lambda$ on $(\lambda_{\omega_1}, 1]$, we have that the minimum is given by $\mathrm{PoA_{\omega_1}}(\lambda^*)$ if $\lambda^* \in (\lambda_{\omega_1}, 1]$ and $\mathrm{PoA_{\omega_1}}(1)$ otherwise. It can be shown that $\lambda^* \in (\lambda_{\omega_1}, 1]$ if and only if $\degauto \geq \degauto_1$ where
\begin{equation}
    \degauto_1 =  1 + \frac{\degasym - \sqrt{\degasym^2+4\degasym}}{2(1-\degasym)}.
\end{equation}
Hence, we may conclude
\begin{equation} \label{eq:PoA_A1}
    \mathrm{PoA} =
    \begin{cases}
        \hfil \mathrm{PoA_{\omega_1}}(1) &
        \mathrm{for} \quad \degauto \in (A_{\omega_1} \cap \degauto < \degauto_1) \\
        \hfil \mathrm{PoA_{\omega_1}}(\lambda^*) &
        \mathrm{for} \quad \degauto \in (A_{\omega_1} \cap \degauto \geq \degauto_1)
    \end{cases}
\end{equation}

\textbf{Case 3.} If $\degauto \in A_{\omega_2}$, we have that $\mathrm{PoA} \leq \inf_{\lambda \in (\lambda_{\omega_2}, 1]} \mathrm{PoA}_{\omega}(\lambda)$
where
\begin{equation}
    \mathrm{PoA}_{\omega}(\lambda) = 
    \begin{cases}
        \hfil \mathrm{PoA}_{\omega_2}(\lambda) & \mathrm{for} \quad \lambda_{\omega_2} \leq \lambda < \lambda^+ \\ 
        \hfil \mathrm{PoA}_{\omega_1}(\lambda) &
        \mathrm{for} \quad \lambda^+ \leq \lambda \leq 1
    \end{cases}.
\end{equation}

Since $\lim_{\lambda \to \lambda_{\omega_2}} \mathrm{PoA}_{\omega_2}(\lambda) = \infty$, we have either the minimum occurs at the boundary values $\lambda = \lambda^+, 1$ or critical values of $\mathrm{PoA}_{\omega_1}, \mathrm{PoA}_{\omega_2}$. Since $\partial \mathrm{PoA}_{
\omega_2}/\partial \lambda$ is nonzero everywhere, we have that the only critical value is given by $\lambda^*$ as described in the previous case. Thus, the last bound of interest is
\begin{equation}
    \mathrm{PoA}_{\omega_1}(\lambda^+) = \frac{2\degauto\degasym(1+\sqrt{1-\degauto}) - \degauto^2(1-\degasym)\degasym}{\degauto^2(1-\degasym)^2 + \degauto(5\degasym-1) - 2\degasym(1+\sqrt{1-\degauto})}.
\end{equation}

Now, since $\mathrm{PoA}_{\omega_1}$ and $\mathrm{PoA}_{\omega_2}$ are convex functions of $\lambda$ we have that the minimum can be directly determined by the value of $\lambda^*$ relative to $\lambda^+$ and $1$. This provides three subcases to examine:

\begin{enumerate}[label=\alph*.)]
    \item If $\lambda^* > 1$, then $\inf_{\lambda \in (\lambda_{\omega_2}, 1]} \mathrm{PoA}_{\omega}(\lambda) = \mathrm{PoA}_{\omega_1}(1)$. Such a constraint on $\lambda^*$ is satisfied for $\degauto < \degauto_1$.
    \item If $\lambda^* \leq \lambda^+$, then $\inf_{\lambda \in (\lambda_{\omega_2}, 1]} \mathrm{PoA}_{\omega}(\lambda) = \mathrm{PoA}_{\omega_1}(\lambda^+)$. This occurs for $\degauto \geq \degauto_2$ where $\degauto_2 := \{\degauto \mid \lambda^+ = \lambda^* \}$ is given by $\degauto_2 = (1-2\degasym)/(1-\degasym)^2$
    \item If $\lambda^+ < \lambda^* \leq 1$, then $\inf_{\lambda \in (\lambda_{\omega_2}, 1]} \mathrm{PoA}_{\omega}(\lambda) = \mathrm{PoA}_{\omega_1}(\lambda^*)$. This case holds if $\degauto \geq \degauto_1$ and $\degauto < \degauto_2$.
\end{enumerate}

Thus, if $\degauto \in A_{\omega_2}$,
\begin{equation} \label{eq:PoA_A2}
    \resizebox{0.49\textwidth}{!}{ 
    $\mathrm{PoA} \leq
    \begin{cases}
        \hfil \mathrm{PoA_{\omega_1}}(1) &
        \mathrm{for} \quad \degauto \in (A_{\omega_2} \cap \degauto < \degauto_1) \\
        \hfil \mathrm{PoA_{\omega_1}}(\lambda^*) &
        \mathrm{for} \quad \degauto \in (A_{\omega_2} \cap \degauto \geq \degauto_1 \cap \degauto < \degauto_2) \\ 
        \hfil \mathrm{PoA_{\omega_1}}(\lambda^+) &
        \mathrm{for} \quad \degauto \in (A_{\omega_2} \cap \degauto \geq \degauto_2) \\ 
    \end{cases}$.
    }
\end{equation}

Now, since we have constraint regions of $\degauto$ that correspond to specific values of the price of anarchy bound $\mathrm{PoA_{\omega_1}}(1)$, $\mathrm{PoA_{\omega_1}}(\lambda^*)$, and $\mathrm{PoA_{\omega_1}}(\lambda^+)$ given by (\ref{eq:PoA_A1}) and (\ref{eq:PoA_A2}), we may take a union over the matching constraint sets and preserve the inequalities. Therefore, if we define
\begin{equation}
    A_1 := (A_{\omega_1} \cap \degauto < \degauto_1) \cup (A_{\omega_2} \cap \degauto < \degauto_1),
\end{equation}
\begin{equation}
    A_{\lambda^*} := (A_{\omega_1} \cap \degauto \geq \degauto_1) \cup (A_{\omega_2} \cap \degauto \geq \degauto_1 \cap \degauto < \degauto_2),
\end{equation}
\begin{equation}
    A_{\lambda^+} := (A_{\omega_2} \cap \degauto \geq \degauto_2),
\end{equation}
then we have
\begin{equation} \label{eq:PoA_A}
    \mathrm{PoA} \leq
    \begin{cases}
        \hfil \mathrm{PoA_{\omega_1}}(1) &
        \mathrm{for} \quad \degauto \in A_1 \\
        \hfil \mathrm{PoA_{\omega_1}}(\lambda^*) &
        \mathrm{for} \quad \degauto \in A_{\lambda^*} \\ 
        \hfil \mathrm{PoA_{\omega_1}}(\lambda^+) &
        \mathrm{for} \quad \degauto \in A_{\lambda^+} \\ 
    \end{cases}.
\end{equation}
Finally, it can be shown that theses constraint sets reduce to
\begin{equation}
    A_1 = \{\degauto \mid \degauto > \degauto_0, \: \degauto < \degauto_1, \: 0 \leq \degauto \leq 1\},
\end{equation}
\begin{equation}
    A_{\lambda^*} = \{\degauto \mid \degauto \geq \degauto_1, \: \degauto < \degauto_2, \: 0 \leq \degauto \leq 1\},
\end{equation}
\begin{equation}
    A_{\lambda^+} = \{\degauto \mid \degauto \geq \degauto_2, \: 0 \leq \degauto \leq 1\}.
\end{equation}

\end{proof}

The constraint sets from Theorem~\ref{th:PoA_constraint_sets} are shown in Fig.~\ref{fig:constraint_sets}. By calculating the constraint sets given in Theorem~\ref{th:PoA_constraint_sets} for the specific regions of $\degasym$, we may directly obtain the following result.

\begin{figure}
    \centering
    \includegraphics[width=.3\textwidth]{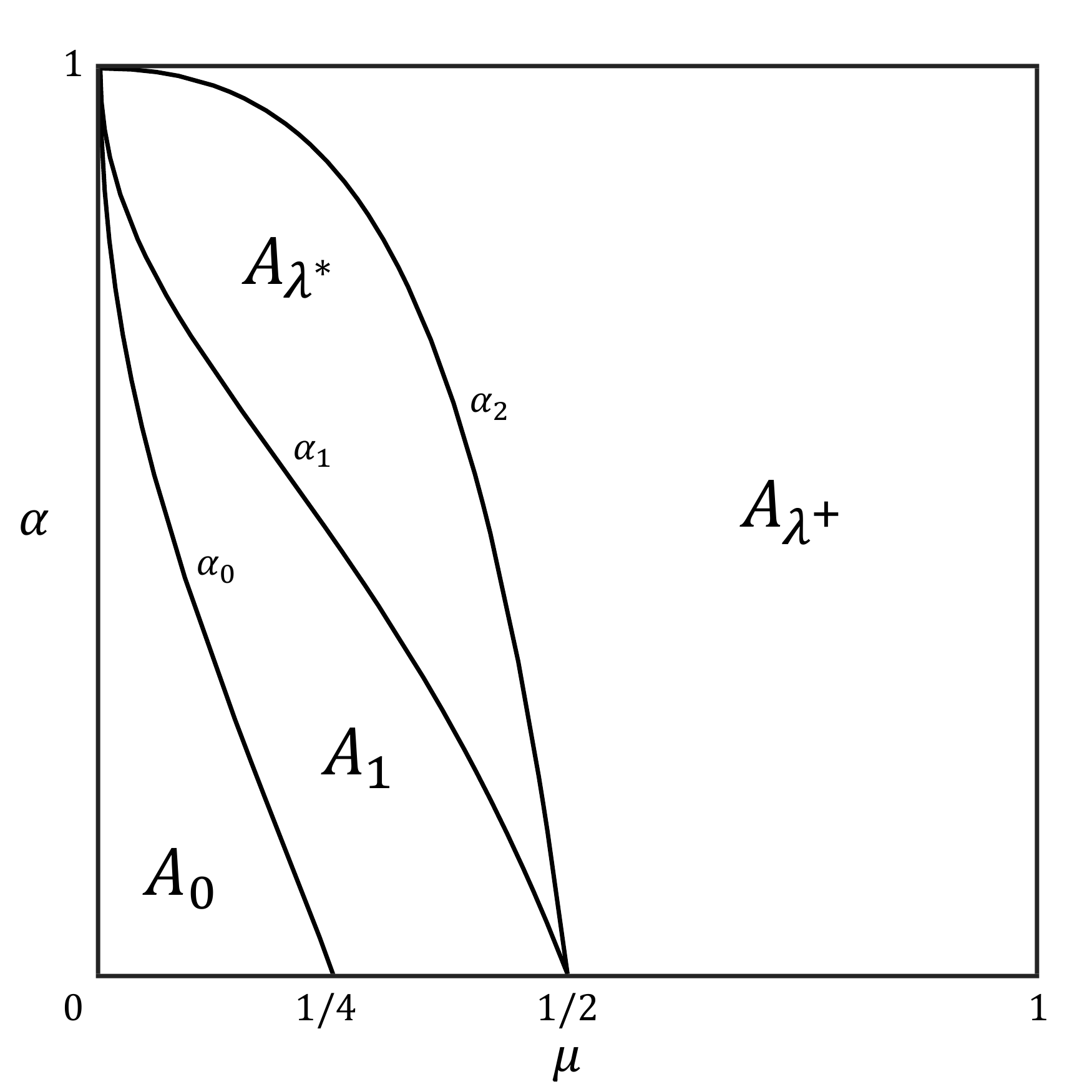}
    \caption{\small The constraint sets as described in Theorem~\ref{th:PoA_constraint_sets}. We see that the sets $A_{\lambda^*}, A_1$ are nonempty only for $0 < \degasym < 1/2$ and $A_0$ is nonempty only for $0 < \degasym \leq 1/4$.}
    \label{fig:constraint_sets}
    \vspace{-0.2in}
\end{figure}

\begin{corollary} \label{co:PoA_bounds}
For a mixed-autonomy routing game $(\network, \demand, \latency)$ with affine latency functions, network autonomy fraction $\degauto$, minimum degree of asymmetry $\degasym$, and mixed-autonomy SCALE strategy $\stack = \degauto(\autoopt + \humanopt)$, we have that the price of anarchy is bounded as a function of only $\degauto$ and $\degasym$ as follows:
\begin{enumerate}[label=\alph*)]
    \item If $1/2 \leq \degasym \leq 1$, then
    \begin{equation}
        \mathrm{PoA} \leq \mathrm{PoA}_{\omega_1}(\lambda^+) \quad \mathrm{for} \quad 0 \leq \degauto \leq 1.
    \end{equation}
    \item If $1/4 < \degasym < 1/2$, then
    \begin{equation}
        \resizebox{0.39\textwidth}{!}{ 
        $\mathrm{PoA} \leq
            \begin{cases}
                \hfil \mathrm{PoA}_{\omega_1}(1) &
                \mathrm{for} \quad 0 \leq \degauto < \degauto_1 \\
                \hfil \mathrm{PoA}_{\omega_1}(\lambda^*) &
                \mathrm{for} \quad \degauto_1 \leq \degauto < \degauto_2 \\
                \hfil \mathrm{PoA}_{\omega_1}(\lambda^+) &
                \mathrm{for} \quad \degauto_2 \leq \degauto \leq 1
            \end{cases}$.
        }
    \end{equation}
    \item If $0 < \degasym < 1/4$, then
    \begin{equation}
        \resizebox{0.39\textwidth}{!}{
        $\mathrm{PoA} \leq
            \begin{cases}
                \hfil \infty &
                \mathrm{for} \quad 0 \leq \degauto \leq \degauto_0 \\
                \hfil \mathrm{PoA}_{\omega_1}(1) &
                \mathrm{for} \quad \degauto_0 < \degauto < \degauto_1 \\
                \hfil \mathrm{PoA}_{\omega_1}(\lambda^*) &
                \mathrm{for} \quad \degauto_1 \leq \degauto < \degauto_2 \\
                \hfil \mathrm{PoA}_{\omega_1}(\lambda^+) &
                \mathrm{for} \quad \degauto_2 \leq \degauto \leq 1
            \end{cases}$.
            }
    \end{equation}
\end{enumerate}
\end{corollary}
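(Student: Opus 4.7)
The corollary is essentially a reorganization of Theorem~\ref{th:PoA_constraint_sets} according to the position of the three threshold values $\degauto_0$, $\degauto_1$, and $\degauto_2$ inside the unit interval $[0,1]$. My plan is to carry out a straightforward sign and ordering analysis on these thresholds as functions of $\degasym$, and then restate the piecewise bound from Theorem~\ref{th:PoA_constraint_sets} within each of the three $\degasym$-regimes.

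First, I would establish which of the thresholds $\degauto_0, \degauto_1, \degauto_2$ are nonnegative (and hence potentially relevant for $\degauto \in [0,1]$). Since $\degauto_0 = (1-2\sqrt{\degasym})/(1-\degasym)$ has a nonnegative numerator exactly when $\degasym \leq 1/4$ (and positive denominator for $\degasym < 1$), we get $\degauto_0 > 0$ iff $0 < \degasym < 1/4$; the same sign analysis applies to $\degauto_1 = (1-2\sqrt{\degasym})/(1-\sqrt{\degasym})^2$. For $\degauto_2 = (1-2\degasym)/(1-\degasym)^2$, the threshold is positive iff $\degasym < 1/2$. Thus the three regimes in the corollary correspond precisely to the three possible sign patterns of $(\degauto_0, \degauto_1, \degauto_2)$.

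Next, in each regime I would translate the constraint sets $A_0, A_1, A_{\lambda^*}, A_{\lambda^+}$ of Theorem~\ref{th:PoA_constraint_sets} into explicit subintervals of $[0,1]$. When $\degasym \geq 1/2$, all three thresholds are nonpositive, so $A_0$, $A_1$, and $A_{\lambda^*}$ intersect $[0,1]$ trivially while $A_{\lambda^+} = [0,1]$, yielding part (a). When $1/4 < \degasym < 1/2$, I have $\degauto_0, \degauto_1 \leq 0$ and $0 < \degauto_2 \leq 1$, so $A_0 \cap [0,1] = \emptyset$, the set $A_1 = \{\degauto > \degauto_0,\ \degauto < \degauto_1\}$ collapses (since $\degauto_1 \leq 0$), and the split is just between $A_{\lambda^*} = [0, \degauto_2)$ and $A_{\lambda^+} = [\degauto_2, 1]$, matching part (b) once one observes that the listed interval $[0, \degauto_1)$ in that part is vacuous in this regime. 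When $0 < \degasym < 1/4$, all three thresholds lie in $(0,1)$, and to assemble part (c) I only need the ordering $\degauto_0 < \degauto_1 < \degauto_2$, which one verifies by direct algebraic comparison of the closed-form expressions (e.g., $\degauto_1 - \degauto_0 = (1-2\sqrt{\degasym})[(1-\sqrt{\degasym})^{-2} - (1-\degasym)^{-1}] > 0$ and a similar manipulation for $\degauto_2 - \degauto_1$, each boiling down to a positive rational expression in $\sqrt{\degasym}$).

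Once the ordering and signs of the thresholds are established, substituting into (\ref{eq:PoA_A}) is mechanical and produces the four stated intervals $A_0$, $A_1$, $A_{\lambda^*}$, $A_{\lambda^+}$ from Theorem~\ref{th:PoA_constraint_sets} restricted to $[0,1]$ for each $\degasym$-regime. I expect the main obstacle to be purely bookkeeping: one has to be careful that when a threshold falls outside $[0,1]$ the corresponding piecewise interval is empty rather than mistakenly retained, and one has to verify the strict orderings $\degauto_0 < \degauto_1 < \degauto_2$ cleanly in the regime $\degasym < 1/4$ so the four PoA cases partition $[0,1]$ without overlap. No new inequality estimates are needed beyond those already in Theorem~\ref{th:PoA_constraint_sets}; the entire corollary reduces to case analysis on the three simple algebraic thresholds.
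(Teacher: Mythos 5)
Your overall plan---a sign and ordering analysis of the thresholds $\degauto_0,\degauto_1,\degauto_2$ as functions of $\degasym$, followed by restriction of the piecewise bound of Theorem~\ref{th:PoA_constraint_sets} to $[0,1]$---is exactly how the paper obtains the corollary (the paper offers no more than this one-line reduction), and parts (a) and (c) go through essentially as you describe.

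There is, however, one concrete error in your threshold analysis that changes the content of part (b). You take the closed form $\degauto_1 = (1-2\sqrt{\degasym})/(1-\sqrt{\degasym})^2$ at face value, conclude $\degauto_1>0$ iff $\degasym<1/4$, and hence declare the interval $[0,\degauto_1)$ vacuous for $1/4<\degasym<1/2$. But $\degauto_1$ is defined by the condition $\lambda^*(\degauto_1)=1$, i.e.\ it is the boundary between the regime where the minimizer $\lambda^*$ of $\mathrm{PoA}_{\omega_1}$ falls outside $(\lambda_{\omega_1},1]$ and the regime where it falls inside. Solving $(1-\degauto(1-\degasym))^2=[2-\degauto(1-\degasym)]\degasym$ gives $\degauto_1 = 1+(\degasym-\sqrt{\degasym^2+4\degasym})/(2(1-\degasym))$, the expression appearing in the theorem's own proof, and this is positive precisely for $\degasym<1/2$---consistent with the caption of Fig.~\ref{fig:constraint_sets}, which states that $A_1$ is nonempty for $0<\degasym<1/2$. (The closed form printed in the theorem statement appears to be a typo; it coincides with $\tilde{\degauto}$ from the $\Lambda$ analysis instead.) With the correct $\degauto_1$, part (b) genuinely has three nonempty cases, whereas your version would assign $\mathrm{PoA}_{\omega_1}(\lambda^*)$ on all of $[0,\degauto_2)$; that is not a valid bound on $[0,\degauto_1)$, because there $\lambda^*>1$ lies outside the feasible set $\Lambda$ and $\mathrm{PoA}_{\omega_1}(\lambda^*)$ is not attained by any admissible $\lambda$---the correct bound on that subinterval is $\mathrm{PoA}_{\omega_1}(1)$. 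The fix is to derive the sign and position of $\degauto_1$ from its defining condition rather than from the printed closed form; the orderings $\degauto_0<\degauto_1<\degauto_2$ needed for part (c) then also follow (for instance at $\degasym=0.2$ one finds $\degauto_0\approx 0.13$, $\degauto_1\approx 0.55$, $\degauto_2\approx 0.94$), and your comparison $\degauto_1-\degauto_0$ should be recomputed with the corrected expression.
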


The price of anarchy bounds from Corollary~\ref{co:PoA_bounds} are shown in Fig.~\ref{fig:PoA Bounds}.

\begin{figure*}
    \centering
    \includegraphics[width=0.87\textwidth]{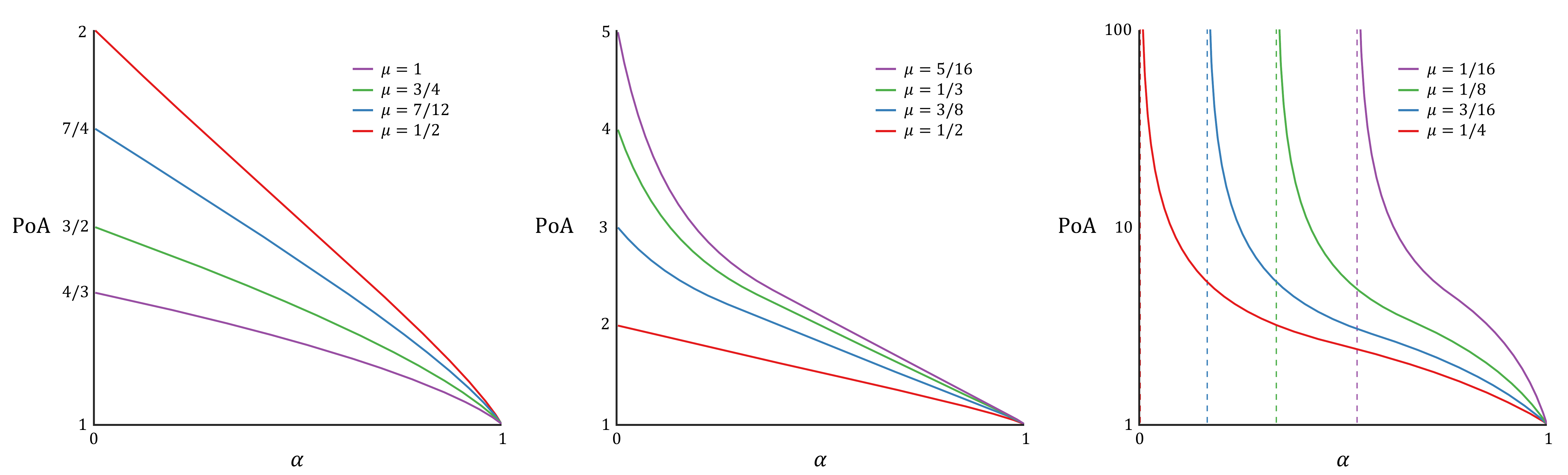}
    \caption{(Left) Price of anarchy upper bounds for select values of $\degasym \in [1/2,1]$. The bound for $\degasym=1$ is equivalent to the bound shown in \cite{BONIFACI2010}. (Middle) Upper bounds for $\degasym \in (1/4, 1/2)$. $\degasym=1/2$ is shown as well for comparison with the leftmost plot. (Right) Upper bounds for $\degasym \in (0,1/4]$, with corresponding values of ${\degauto}_0$ graphed as dashed vertical lines. Notice the logarithmic scale of the y-axis used to visualize the asymptotic behavior as $\degauto \to \degauto_0$}
    \label{fig:PoA Bounds}
    \vspace{-3mm}
\end{figure*}

Now we show that our price of anarchy upper bounds reduce to expected values and results from previous works:

\begin{enumerate}[label=\alph*.)]
    \item For a fully centrally routed autonomous network, i.e. $\degauto = 1$, we expect the price of anarchy to reduce to 1 since the central planner can route all of the vehicles to the optimal solution.
    Consider $\lim_{\degauto \to 1} \mathrm{PoA}$. For all cases in Theorem~\ref{th:PoA_constraint_sets}, $\degauto = 1$ belongs to $A_{\lambda^+}$. Thus, this limit is given as $\lim_{\degauto \to 1} \mathrm{PoA_{\omega_1}}(\lambda^+) = 1.$
    \item The implementation of a central planner should decrease the price of anarchy within the network. For our bound, consider $\lim_{\degauto \to 0} \mathrm{PoA}$. By Theorem~\ref{th:PoA_constraint_sets} we know that this limit is given by $\lim_{\degauto \to 0}\mathrm{PoA_{\omega_1}}(1)$ for $1/4 < \degasym < 1/2$ and $\lim_{\degauto \to 0}\mathrm{PoA_{\omega_1}}(\lambda^+)$ for $1/2 \leq \degasym \leq 1$. It can be shown that
    \begin{equation}
         \lim_{\degauto \to 0}\mathrm{PoA_{\omega_1}}(1) = \lim_{\degauto \to 0}\mathrm{PoA_{\omega_1}}(\lambda^+) = \frac{4\degasym}{4\degasym - 1}.
    \end{equation}
    Since we have that our prince of anarchy bound is a decreasing function of $\degauto$, we have that $\mathrm{PoA} < 4\degasym / (4\degasym - 1)$ for all $\degauto \in (0,1)$ and $\degasym \in (1/4, 1]$. Hence, the mixed autonomy SCALE strategy \textit{always reduces} the price of anarchy bound relative to the autonomous selfish routing bound developed in \cite{lazar2018price}.
    \item For a network with a single class of vehicles, that is, $\degasym = 1$, we should have that our bound reduces to previously known results. To this end, we calculate the upper bound on $\mathrm{PoA}\Bigr|_{\degasym = 1}$. From Theorem~\ref{th:PoA_constraint_sets} we have that this value is
    \begin{equation}
        \mathrm{PoA_{\omega_1}}(\lambda^+) \Bigr|_{\degasym = 1} = \frac{(1+\sqrt{1-\degauto})^2}{2(1+\sqrt{1-\degauto})-1}.
    \end{equation}
    This aligns with the price of anarchy bound given in \cite{BONIFACI2010}.
\end{enumerate}

\section{Conclusion and Future Work}\label{sec:conclusion}
In this paper, we developed a Stackelberg routing strategy for autonomous cars in a mixed-autonomy traffic network with arbitrary geometry. We developed an extension of the SCALE Stackelberg strategy for mixed-autonomy networks with affine latency functions. We bounded the price of anarchy that our Stackelberg strategy induces and proved that our proposed Stackelberg routing will reduce the network price of anarchy. We demonstrated that in the limit, when the fraction of autonomous cars is negligible, our bound recovers the price of anarchy bounds for networks of only human-driven cars. In the future, we would like to examine the tightness of our bounds. While our analysis here was limited to affine latency functions, the methodology proposed in Section~\ref{sec:prior-work} is applicable to general nonlinear latency functions. We would like to further extend our analysis to include results for mixed-autonomy routing games with such latency functions.

\section{Acknowledgment}
This work is supported by the National Science Foundation, under CAREER Award
ECCS-2145134.


 \bibliographystyle{ieeetr}%
\begingroup
\setstretch{0.87}
\bibliography{bib/references}
\endgroup

\end{document}